\documentclass[10pt,reqno]{amsart}
\usepackage{floatflt}
\usepackage{xypic}
\usepackage{rotating}
\usepackage{bbm}
\usepackage{mathrsfs}
\usepackage{diagbox}
\usepackage{dutchcal}
\usepackage{extarrows}
\usepackage{cite}
\usepackage{amsfonts} 
\usepackage[dvipsnames,usenames]{color}
\usepackage{graphicx,latexsym,bm,amsmath,amssymb,verbatim,multicol,lscape}
\makeatletter
\@namedef{subjclassname@2020}{\textup{2020} Mathematics Subject Classification}
\makeatother
\newtheorem{thm}{Theorem} [section]

\newtheorem{lem}{Lemma}

\newtheorem{prob}{Problem}

\numberwithin{equation}{section}
\numberwithin{lem}{section}
\numberwithin{con}{section}
\numberwithin{cor}{section}
\numberwithin{rem}{section}
\numberwithin{prob}{section}

\def\bew{\begin{widetext}}
\def\eew{\end{widetext}}
\def\be{\begin{equation}}
\def\ee{\end{equation}}
\def\bea{\begin{eqnarray}}
\def\eea{\end{eqnarray}}

\allowdisplaybreaks
\begin{document}
\title[]{On Almost MDS codes holding $(n-k-1)$-designs and an open problem on Witt design}
\author[H.K. Wei]{Huakai Wei}
\address{Mathematical College, Sichuan University, Chengdu 610064, P.R. China}
\email{1810097@mail.nankai.edu.cn}
\author[S.Y. Qiang]{Shiyuan Qiang}
\address{Mathematical College, Sichuan University, Chengdu 610064, P.R. China}
\email{syqiang23@163.com}
\author[S.F. Hong]{Shaofang Hong$^*$}
\address{Mathematical College, Sichuan University, Chengdu 610064, P.R. China}
\email{sfhong@scu.edu.cn}
\begin{abstract}
Let $\mathcal{C}$ be an almost maximum distance separable (AMDS) code
over the finite fields with parameters $[n,k,n-k]$ whose minimum-weight
codewords support an $(n-k-1)$-$(n,n-k,\lambda)$ design. We first determine
the parameters of the dual code $\mathcal{C}^{\bot}$. We also characterize
the design and its complementary design supported by the minimum weight
codewords of $\mathcal{C}^{\bot}$. In particular, we present explicit formulas
for the weight enumerators of all subset codes of $\mathcal{C}^{\bot}$ with
length at least $k+1$. These formulas are completely determined by the cardinality
of the subset, the dimension, together with the parameter $\lambda$, and are
independent of the specific choice of subsets.
As an application, we answer an open problem posed by Wang, Tang and Ding in 2023
by proving that the $3$-design supported by the minimum-weight codewords in their
proposed cyclic dual code is isomorphic to the complementary design of the Witt
spherical geometry design.
\end{abstract} 

\thanks{$^*$S.F. Hong is the corresponding author and was supported partially
by National Science Foundation of China \#12571007.}
\keywords{Linear code; Singleton bound; BCH code; AMDS code; NMDS code; $t$-design; Steiner system}
\subjclass[]{11T71, 94B15, 94B05, 94A60}
\maketitle

\section{Introduction}
Let $\mathbb{F}_q$ be the finite
field of $q$ elements, where $q=p^{m}$ with $p$ a prime
and $m$ a positive integer. Let
$\mathbb{F}_q^{\ast}:=\mathbb{F}_q\setminus\{0\}$
be the multiplicative group. Let $n\ge 1$ be an integer
and $\langle n\rangle:=\{1,2,\cdots,n\}$.
For any finite set $A$, let $\left|A\right|$ denote the cardinality of $A$.
Let ${\bf a}=(a_{1}, \cdots, a_{n})$ and
${\bf b}=(b_{1}, \cdots, b_{n})$ be two vectors over $\mathbb{F}_q$.
The {\it Hamming distance} between ${\bf a}$ and ${\bf b}$ is defined as
$d({\bf a}, {\bf b}):=\left|\{i\in\langle n\rangle \mid a_{i}\neq b_{i}\}\right|$.
The {\it support set} ${\rm supp}({\bf a})$ of ${\bf a}
\in\mathbb{F}_q^{n}$ is defined by
${\rm supp}({\bf a}):=\{i\in\langle n\rangle\mid a_{i}\neq0\}$.
The {\it Hamming weight} $w({\bf a})$ of ${\bf a}$ is defined as
$w({\bf a}):=d({\bf a}, {\bf 0})$. Then $w({\bf a})=\left|{\rm supp}({\bf a})\right|$.
The {\it minimum Hamming distance}, denoted by
$d=d(\mathcal{C})$, of a code $\mathcal{C}\subseteq
\mathbb{F}_q^{n}$ is defined to be the smallest Hamming
distance between distinct codewords.
An $[n,k,d]$ linear code $\mathcal{C}$ over $\mathbb{F}_q$ is
a $k$-dimensional subspace of the $n$-dimensional affine space $\mathbb{F}_q^n$
with minimum Hamming distance $d$.
The {\it dual code} $\mathcal{C}^{\bot}$ of the $[n,k]$ linear code $\mathcal{C}$ over $\mathbb{F}_q$ is
defined by $\mathcal{C}^{\bot}:=\{{\bf c}^{\bot}\in\mathbb{F}_q^n\mid
\langle {\bf c},{\bf c}^{\bot}\rangle=0,\forall{\bf c}\in \mathcal{C}\}$,
where $\langle {\bf c},{\bf c}^{\bot}\rangle$ denote the Euclidean
inner product of ${\bf c}$ and ${\bf c}^{\bot}$ over $\mathbb{F}_q$.
For a linear code $\mathcal{C}$, associated with any given subset
$I\subseteq\langle n\rangle$, we define the {\it subset code}
$\mathcal{C}(I)$ of $\mathcal{C}$ by
$\mathcal{C}(I):=\{{\bf c}\in\mathcal{C}\mid
{\rm supp}({\bf c})\subseteq I\}$ (see \cite{[QWH-FFA25]}).
For any integer $i$ with $0\leq i\leq n$, let $A_{i}$
stand for the number of codewords in $\mathcal{C}$
with length $n$ and the Hamming weight $i$. Then the
{\it weight enumerator} of $\mathcal{C}$ is defined
to be the polynomial $1+A_{1}x+A_{2}x^{2}+\cdots+A_{n}x^{n}.$
The sequence $(1, A_{1}, A_{2},\cdots, A_{n})$ is called
the {\it weight distribution} of $\mathcal{C}$.
For the sake of simplicity, we assume throughout this paper
that all linear codes are non-degenerate.

Several well-known AMDS codes are known to support $(n-k-1)$-designs, among which the binary
Hamming code with parameters $[7,4,3]$ is a classical example. In recent years, such codes
have attracted considerable attention. In 2021, Ding, Tang and Tonchev \cite{[DTT-DCC21]}
presented an infinite family of AMDS codes with parameters $[q+1,q-3,4]$ over $\mathbb{F}_{2^m}$
that support a $3$-$(q+1,4,2)$ design. They also determined the parameters of the dual codes,
namely $[q+1,4,q-4]$, but did not derive the general weight enumerators of these codes. In 2022,
Xu, Cao and Qu \cite{[XCQ-IT22]} presented three infinite families of AMDS codes with parameters
$[q+1,q-3,4]$, $[q,q-3,3]$ and $[q,q-4,4]$, supporting $3$-$(q+1,4,p^l-2)$, $2$-$(q,3,p^l-2)$ and
$3$-$(q,4,1)$ designs, respectively. They further determined the corresponding weight enumerators.
Subsequently, Xiang, Tang and Liu \cite{[XTL-DCC22]} constructed another infinite family of AMDS codes
with parameters $[q+1,q-3,4]$ over $\mathbb{F}_{7^m}$ that support a $3$-$(q+1,4,5)$ design, and
established the weight enumerators of their dual codes. In 2023, Wang, Tang and Ding \cite{[WTD-IT23]}
provided an infinite family of AMDS codes with parameters $[q+1,q-3,4]$ supporting a
$3$-$(q+1,4,p^l-2)$ design. They completely determined the weight enumerators of both the codes
and their duals. In 2026, Xu, Wu, Lu and Cao \cite{[XWLC-FFA26]} further generalized these results.

In this paper, we study the almost maximum distance separable (AMDS) code $\mathcal{C}$ over the
finite field $\mathbb{F}_q$ with parameters $[n,k,n-k]$ whose minimum-weight codewords support
an $(n-k-1)$-$(n,n-k,\lambda)$ design. Particularly, we subsume the above-mentioned
case-by-case computations of the weight enumerators of the dual code $\mathcal{C}^{\bot}$
into a single unified framework. We will show that if an AMDS code with parameters $[n,k,n-k]$
has its minimum weight codewords supporting an $(n-k-1)$-$(n,n-k,\lambda)$ design, then both
the parameters and the supported design of its dual code can be completely determined. Furthermore,
we derive explicit formulas for the weight enumerators of all subset codes of the dual code with
lengths at least $k+1$. These weight enumerators depend only on the subset length, the dimension
together with the parameter $\lambda$, and are independent of the specific choice of subsets.
That is, we have the first main result of this paper as follows.
\begin{thm}\label{thm1.1}
Let $\mathcal{C}$ be an AMDS code over $\mathbb{F}_q$ with parameters $[n,k,n-k]$.
If the minimum weight codewords in $\mathcal{C}$ support an
$(n-k-1)$-$(n,n-k,\lambda)$ design $\mathbb{D}=(\mathbb{P},\mathbb{B})$, then
the dual code $\mathcal{C}^{\bot}$ has parameters $[n,n-k,k+1-\lambda]$,
and the minimum weight codewords of $\mathcal{C}^{\bot}$ support an
$(n-k-1)$-$(n,k+1-\lambda,\frac{\binom{k+1}{n-k-1+\lambda}}
{\binom{k+1}{\lambda}})$ design $\mathbb{D}^{\bot}=(\mathbb{P},\mathbb{B}^{\bot})$.
The complementary design of $\mathbb{D}^{\bot}$, denoted by
${\mathbb{D}^{\bot}}'=(\mathbb{P},{\mathbb{B}^{\bot}}')$,
is an $(n-k-1)$-$(n,n-k-1+\lambda,1)$ design, where
${\mathbb{B}^{\bot}}'=\{\cup_{B\in\mathbb{B}\atop I\subseteq B}B\mid
I\subseteq\mathbb{P},\left|I\right|=n-k-1\}$.

Moreover, for any subset $S\subseteq\langle n\rangle$
with $|S|=k+r,r\in\langle n-k\rangle$,
the weight enumerator $A_{\mathcal{C}^{\bot}(S)}(z)$ of
$\mathcal{C}^{\bot}(S)$ is given by
\begin{align*}
&A_{\mathcal{C}^{\bot}(S)}(z)
=1+(q-1)\prod_{j=1}^{r-1}\frac{k+1+j}{\lambda+j}z^{k+1-\lambda}
+\sum_{i=k+2}^{k+r}z^{i}\binom{k+r}{i}\\
&\Big((-1)^{i-k}\binom{i-1}{k}+\sum_{j=0}^{i-k-1}\binom{i}{j}(-1)^{j}q^{i-k-j}
+(q-1)\binom{i}{k+1}\sum_{j=k+1-\lambda}^{k}\frac{(-1)^{i-j}
\binom{\lambda}{k+1-j}}{\binom{i-j}{k+1-j}}\Big).
\end{align*}
\end{thm}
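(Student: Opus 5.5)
The plan rests on one dimension identity. For any $U\subseteq\langle n\rangle$, puncturing gives $\dim\mathcal{C}^{\bot}(U)=|U|-\operatorname{rank}(G_U)$, where $G$ is a generator matrix of $\mathcal{C}$ and $G_U$ its columns indexed by $U$. Also $\operatorname{rank}(G_U)=k-\dim\mathcal{C}(\overline{U})$, so
\[
\dim\mathcal{C}^{\bot}(U)=|U|-k+\dim\mathcal{C}(\overline{U}).
\]
We first record a standard fact. Two minimum weight codewords of $\mathcal{C}$ with the same support are proportional; otherwise a combination killing one coordinate would have weight $<n-k$. Hence $\dim\mathcal{C}(B)=1$ for each block $B$, and $\mathcal{C}(T)=0$ for $|T|\le n-k-1$. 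Consequently $\dim\mathcal{C}^{\bot}(S)=r$ whenever $|S|=k+r$ with $r\ge 1$.

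\textbf{The case $|S|=k+1$.} Let $I=\overline{S}$, so $|I|=n-k-1$. By the design property, $I$ lies in exactly $\lambda$ blocks, necessarily $B_j=I\cup\{x_j\}$ with distinct $x_j\in S$. Let $\mathbf{c}$ span the line $\mathcal{C}^{\bot}(S)$. Orthogonality with the codeword supported on $B_j$ involves only the coordinate $x_j$, which forces $c_{x_j}=0$. Conversely, take $y\in S\setminus\{x_1,\dots,x_\lambda\}$. Then $\dim\mathcal{C}^{\bot}(S\setminus\{y\})=\dim\mathcal{C}(I\cup\{y\})=0$ because $I\cup\{y\}$ is not a block, so $c_y\neq 0$. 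Thus ${\rm supp}(\mathbf{c})=S\setminus\{x_1,\dots,x_\lambda\}$, which has size $k+1-\lambda$.

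\textbf{Minimum distance and designs.} Any nonzero $\mathbf{c}\in\mathcal{C}^{\bot}$ with $w(\mathbf{c})\le k+1$ lies in some $\mathcal{C}^{\bot}(S)$ with $|S|=k+1$. By the previous paragraph it then has weight exactly $k+1-\lambda$, so $d(\mathcal{C}^{\bot})=k+1-\lambda$. The same argument shows that the minimum supports are exactly the complements of $\bigcup_{B\supseteq I}B$, and each such union has size $n-k-1+\lambda$; this gives ${\mathbb{B}^{\bot}}'$.

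For the Steiner property, fix any $(n-k-1)$-set $J$. Every block $D$ of ${\mathbb{D}^{\bot}}'$ with $D\supseteq J$ corresponds to a minimum codeword lying in the one-dimensional space $\mathcal{C}^{\bot}(\overline{J})$. Hence such a $D$ exists and is unique, and ${\mathbb{D}^{\bot}}'$ is an $(n-k-1)$-$(n,n-k-1+\lambda,1)$ design. Next, $\mathbb{D}^{\bot}$ is the complementary design, and its $\lambda$ is obtained from the usual block count $b=\binom{n}{t}/\binom{n-k-1+\lambda}{t}$ with $t=n-k-1$. Simplifying binomials should give $\binom{k+1}{n-k-1+\lambda}/\binom{k+1}{\lambda}$. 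Also, no two independent codewords of weight $\le k$ can share a set $U$ with $|U|\le k$, since both would lie in a one-dimensional $\mathcal{C}^{\bot}(S)$ with $S\supseteq U$, $|S|=k+1$. So for $|U|\le k$ we get $q^{\dim\mathcal{C}^{\bot}(U)}-1=(q-1)\,N(U)$, where $N(U)$ is the number of minimum supports contained in $U$.

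\textbf{Weight enumerator.} The number of minimum supports inside a set $T$ of size $i\ge k+1$ is found by double counting. Each $(k+1)$-subset of $T$ contains exactly one minimum support, and each minimum support $M$ lies in $\binom{i-k-1+\lambda}{\lambda}$ such subsets. Hence $N(T)=\binom{i}{k+1}/\binom{i-k-1+\lambda}{\lambda}$. This already gives $A_{k+1-\lambda}=(q-1)N(S)$, which should simplify to $(q-1)\prod_{j=1}^{r-1}\frac{k+1+j}{\lambda+j}$. For $i\ge k+2$, the number of codewords with support exactly $T$, $|T|=i$, is obtained by M\"obius inversion:
\[
\sum_{U\subseteq T}(-1)^{i-|U|}\bigl(q^{\dim\mathcal{C}^{\bot}(U)}-1\bigr).
\]
We split this sum at $|U|=k$.

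For $|U|\ge k+1$ the summand is $q^{|U|-k}-1$. After reindexing $j=i-|U|$ this yields $\sum_{j=0}^{i-k-1}\binom{i}{j}(-1)^jq^{i-k-j}$. The constant $-1$ terms combine into the alternating binomial sum $(-1)^{i-k}\binom{i-1}{k}$.

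For $|U|\le k$ the summand is $(q-1)N(U)$, and we swap the order of summation over pairs $M\subseteq U\subseteq T$. Counting the $U$ of size $j$ containing a fixed $M$ and dividing through, using again the count of supports in $T$, should produce the term $(q-1)\binom{i}{k+1}\sum_{j=k+1-\lambda}^{k}\frac{(-1)^{i-j}\binom{\lambda}{k+1-j}}{\binom{i-j}{k+1-j}}$. Multiplying by the number $\binom{k+r}{i}$ of choices of $T$ gives the stated formula.

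The main obstacle is this last binomial bookkeeping: matching the double-counted sum over $(M,U)$ to exactly the displayed quotient of binomials. The structural part, the $k+1$ case and the Steiner property, is straightforward once the dimension identity is in hand.
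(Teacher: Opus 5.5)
Your proposal is correct. It shares the paper's skeleton: a one-dimensional subset code on each $(k+1)$-set, spanned by a weight-$(k+1-\lambda)$ word, which yields $d^{\bot}$, the Steiner property of ${\mathbb{D}^{\bot}}'$ and, via complements, $\lambda^{\bot}$, followed by M\"obius inversion. The tools you use differ at each stage.

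\emph{Structural step.} The paper works with columns of a parity-check matrix and its flag lemma (Lemma~\ref{lem_structure}) with $r=1$, identifying the support as the complement of $\{j : H_j\in\mathrm{span}(H_i,i\in I)\}$. You use $\dim\mathcal{C}^{\bot}(U)=|U|-k+\dim\mathcal{C}(\overline{U})$ together with orthogonality against the $\lambda$ block codewords through $I$. Your route is more elementary and self-contained. The paper's lemma is more general, since it handles arbitrary codimension and is reused in its final remarks.

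\emph{Inversion.} The paper keeps $g(T)\in\{1,q\}$ for $|T|\le k$. It spreads each such $T$ over its $\binom{k+r-i}{k+1-i}$ supersets of size $k+1$ inside $S$ and counts inside each $(k+1)$-set, which produces $\binom{\lambda}{k+1-j}/\binom{i-j}{k+1-j}$ directly. You subtract the constant $1$, which is legitimate because the alternating sum over subsets of a nonempty $T$ vanishes. You then swap to a sum over minimal supports. This gives $\binom{i-k-1+\lambda}{j-k-1+\lambda}/\binom{i-k-1+\lambda}{\lambda}$ instead. The ``bookkeeping'' you flag is just the identity $\binom{N}{\lambda}\binom{\lambda}{a}=\binom{N}{a}\binom{N-a}{\lambda-a}$ with $N=i-k-1+\lambda$ and $a=j-k-1+\lambda$.

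\emph{Minimum-weight count.} Your double count $N(S)=\binom{k+r}{k+1}/\binom{r-1+\lambda}{\lambda}$ is what actually justifies the product in the $z^{k+1-\lambda}$ coefficient. The paper derives this count from the restricted Steiner system $\mathbb{D}_S$. However, the expression it writes, $(q-1)\binom{k+1}{|S|-k-1+\lambda}/\binom{k+1}{\lambda}$, is the $\lambda$-value of the complementary design rather than the number of blocks. For example, with $n=17$, $k=13$, $\lambda=2$ and $S=\langle n\rangle$, the binomial quotient is $22$, whereas the block count is $68$. On this point your route is the cleaner and correct one.
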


Witt spherical geometry design is a special class of Steiner systems $\mathrm{S}(3,q+1,  q^l+1)$
admitting the projective linear group $\mathrm{PGL}(2,q)$ as an automorphism group
\cite{[W-AMSUH37]}. Owing to their rich algebraic and combinatorial properties, they have
attracted considerable attention in coding theory. In particular, Ding and Tang \cite{[DTT-DCC21]},
\cite{[DT-IT20]}, \cite{[TD-IT21]} constructed several infinite families of linear
codes related to the Witt spherical geometry design in 2020 and 2021. Building upon these
developments, Liu, Ding, Mesnager, Tang and Tonchev \cite{[LDMTT-IT22]} established an infinite family
of BCH codes over finite fields whose minimum weight codewords support the Witt spherical geometry
design. In 2023, Wang, Tang, and Ding \cite{[WTD-IT23]} investigated an infinite family of cyclic codes
$\mathcal{C}_{(q,q+1,3,\frac{p^s-1}{2})}$ whose minimum weight codewords support $3$-designs, together
with their dual codes, where $q=p^m$ with $p$ a prime, $m\geq 2$ an integer, and $s$ an integer satisfying
$1\leq s\leq m-1$. Particularly, they determined the parameters of these codes and established
their weight enumerators. Moreover, they proposed the following open problem.

\begin{prob}\label{prob1.1} (Open problem 18 in \cite{[WTD-IT23]})
Is the $3$-design supported by the minimum weight codewords of the dual code
$\mathcal{C}_{(q,q+1,3,\frac{p^s-1}{2})}^{\bot}$ isomorphic to the complementary design of
the Witt spherical geometry design with parameters $3$-$ (q+1,p^{\gcd(m,s)}+1,1)$?
\end{prob}

The second goal of this paper is to investigate the parameters and the structures supported by their
minimum weight codewords of both linear code $\mathcal{C}_{(q,q+1,3,\frac{p^s-1}{2})}$ and its dual code
$\mathcal{C}_{(q,q+1,3,\frac{p^s-1}{2})}^{\bot}$. As an application of Theorem \ref{thm1.1}, we will use
it to study Problem \ref{prob1.1}. The second and third main results of this paper can be stated as follows.

\begin{thm}\label{thm1.2}
Let $q=p^m$ with $p$ being a prime number and $m\ge 2$ being a positive integer,
and let $\mathcal{C}_{(q, q+1, 3, h)}$ denote the BCH code over $\mathbb{F}_q$.
Suppose that $2h+1\equiv p^{m'}\pmod{q+1}$, where $m'$ is a nonnegative integer.
Let $t=\gcd(m,m')$. Then each of the following is true:

{\rm (i).} If $m'\equiv 0\pmod{m}$, then $\mathcal{C}_{(q, q+1, 3, h)}$
is an MDS code with parameters $[q+1, q-2, 4]$.

{\rm (ii).} If $m'\not\equiv 0\pmod{m}$ and $p^t=2$, then $\mathcal{C}_{(q, q+1, 3, h)}$
is an MDS code with parameters $[q+1 , q-3 , 5]$.

{\rm (iii).} If $m'\not\equiv 0\pmod{m}$ and $p^t>2$, then $\mathcal{C}_{(q, q+1, 3, h)}$
is an AMDS code with parameters $[q+1 , q-3 , 4]$,
and the minimum weight codewords in $\mathcal{C}_{(q, q+1, 3, h)}$
support a $3$-$(q+1,4,p^t-2)$ design, denoted by $\mathbb{D}=(\mathbb{P},\mathbb{B})$
with $\mathbb{P}=U_{q+1}$ and
\begin{align*}
\mathbb{B}=&\{\{\mathrm{\pi}_{a,b,c,d}(0,\infty,1,\xi)\mid
\xi\in\mathbb{F}_{p^t}^{\ast}\setminus\{1\},
\{x,y,z\}\subseteq U_{q+1},x\ne y\ne z\ne x\},
\end{align*}
where $a = y(x-z)$, $b = x-z$, $c = x(z-y)$, $d = z-y$,
$\mathrm{\pi}_{a,b,c,d}(\xi) = \frac{a\xi + c}{b\xi + d}$
is a projective bijection between $\mathbb{F}_q\cup\{\infty\}$ and $U_{q+1}$.
\end{thm}

\begin{thm}\label{thm1.3}
Let $q=p^m$ with $p$ being a prime number and $m\ge 2$ being a positive integer.
Let $\mathcal{C}_{(q, q+1, 3, h)}^{\bot}$  be the dual code
of the BCH code $\mathcal{C}_{(q, q+1, 3, h)}$. Suppose that
$2h+1\equiv p^{m'}\pmod{q+1}$, where $m'$ is a nonnegative integer,
and let $t=\gcd(m,m')$.
Then each of the following is true:

{\rm (i).} If $m'\equiv 0\pmod{m}$, then $\mathcal{C}_{(q, q+1, 3, h)}^{\bot}$ is an MDS code with parameters $[q+1,3,q-1]$.

{\rm (ii).} If $m'\not\equiv 0\pmod{m}$ and $p^t=2$, then $\mathcal{C}_{(q, q+1, 3, h)}^{\bot}$ is an MDS code with parameters $[q+1,4,q-2]$,

{\rm (iii).} If $m'\not\equiv 0\pmod{m}$ and $p^t>2$, then $\mathcal{C}_{(q, q+1, 3, h)}^{\bot}$ is a linear code with parameters $[q+1,4,q-p^t ]$,
and the minimum weight codewords in $\mathcal{C}_{(q, q+1, 3, h)}^{\bot}$ support a $3$-$(q+1,q-p^t,\lambda^{\bot})$ design,
which is isomorphic to the complementary design of the Witt spherical geometry design with parameters $3$-$(q+1,p^t+1,1)$,
where $\lambda^{\bot}=\frac{\binom{q-2}{p^t+1}}{\binom{q-2}{p^t-2}}$.
Moreover, the weight enumerator of $\mathcal{C}_{(q, q+1, 3, h)}^{\bot}(S)$ for any subset $S$ with $|S|=(q-3)+r\ge q-2$ is
\begin{align*}
&A_{\mathcal{C}_{(q, q+1, 3, h)}^{\bot}(S)}(z)
=1+(q-1)\prod_{j=1}^{r-1}\frac{q-2+j}{\lambda+j}z^{q-2-\lambda}
+\sum_{i=q-1}^{q-3+r}z^{i}\binom{q-3+r}{i}\\
&\Big((-1)^{i-q+3}\binom{i-1}{q-3}+\sum_{j=0}^{i-q+2}\binom{i}{j}(-1)^{j}q^{i-q+3-j}
+(q-1)\binom{i}{q-2}\sum_{j=q-p^t}^{q-3}\frac{(-1)^{i-j}\binom{p^t-2}{q-2-j}}
{\binom{i-j}{q-2-j}}\Big).
\end{align*}

In particular, the weight enumerator of $\mathcal{C}_{(q, q+1, 3, h)}^{\bot}$ is obtained by taking $S=\langle n\rangle$.
\end{thm}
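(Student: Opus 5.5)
Theorem \ref{thm1.3} will be obtained as a specialization of Theorem \ref{thm1.1} to the codes of Theorem \ref{thm1.2}. Cases (i) and (ii) are immediate: the dual of an MDS code is MDS. So from Theorem \ref{thm1.2}(i),(ii) the duals have parameters $[q+1,3,q-1]$ and $[q+1,4,q-2]$ respectively.

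For case (iii), Theorem \ref{thm1.2}(iii) says $\mathcal{C}_{(q,q+1,3,h)}$ is AMDS with parameters $[q+1,q-3,4]$. Its minimum weight codewords support a $3$-$(q+1,4,p^t-2)$ design. Here $n=q+1$, $k=q-3$, $n-k=4$, $n-k-1=3$ and $\lambda=p^t-2$, so Theorem \ref{thm1.1} applies directly. It gives the following.
\begin{itemize}
\item The dual is $[q+1,4,k+1-\lambda]=[q+1,4,q-p^t]$.
\item The minimum weight codewords of the dual support a $3$-$(q+1,q-p^t,\lambda^\bot)$ design, with
\[
\lambda^\bot=\binom{k+1}{n-k-1+\lambda}\Big/\binom{k+1}{\lambda}=\binom{q-2}{p^t+1}\Big/\binom{q-2}{p^t-2}.
\]
\item The weight enumerator of every subset code follows by substituting $k=q-3$ and $\lambda=p^t-2$ into the formula of Theorem \ref{thm1.1}. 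Taking $S=\langle n\rangle$ ($r=4$) gives the full weight enumerator.
\end{itemize}
All of this is bookkeeping. The one thing to check is that the index ranges translate correctly; for instance, $j$ runs from $k+1-\lambda=q-p^t$ to $k=q-3$.

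The substantive part is the isomorphism with the complementary Witt design. Theorem \ref{thm1.1} says the complement ${\mathbb{D}^\bot}'$ is a $3$-$(q+1,p^t+1,1)$ design, i.e.\ a Steiner system $S(3,p^t+1,q+1)$. Its blocks are $\bigcup_{B\in\mathbb{B},\,I\subseteq B}B$ for each $3$-subset $I$. I will describe these blocks explicitly using the block description in Theorem \ref{thm1.2}(iii). For a triple $I=\{x,y,z\}\subseteq U_{q+1}$, the blocks of $\mathbb{B}$ containing $I$ should be exactly $\{x,y,z,\pi(\xi)\}$ with $\xi\in\mathbb{F}_{p^t}^*\setminus\{1\}$. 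Here $\pi=\pi_{a,b,c,d}$ is the Möbius map sending $0,\infty,1$ to the three points of $I$ in a suitable order. Counting confirms this: there are $p^t-2$ such $\xi$, which equals $\lambda$. The union of these blocks is then $\pi(\mathbb{F}_{p^t}\cup\{\infty\})$.

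The Witt spherical geometry design has point set $\mathbb{F}_q\cup\{\infty\}$. Its blocks are the $\mathrm{PGL}(2,q)$-images of the subline $\mathbb{F}_{p^t}\cup\{\infty\}$. Fix one Möbius bijection $\sigma:\mathbb{F}_q\cup\{\infty\}\to U_{q+1}$, for example a Cayley-type map. Then each block $\pi(\mathbb{F}_{p^t}\cup\{\infty\})$ equals $\sigma(g(\mathbb{F}_{p^t}\cup\{\infty\}))$, where $g=\sigma^{-1}\circ\pi$ is a fractional linear map. So $\sigma$ carries every Witt block to a block of ${\mathbb{D}^\bot}'$. Both designs are Steiner systems with the same number of blocks, $\binom{q+1}{3}/\binom{p^t+1}{3}$, so this injection of blocks is a bijection. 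Hence ${\mathbb{D}^\bot}'$ is isomorphic to the Witt design, and taking complements shows $\mathbb{D}^\bot$ is isomorphic to the complementary Witt design.

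I expect the main obstacle to be the claim that the blocks through $\{x,y,z\}$ are exactly those coming from $\pi$ with $0,\infty,1\mapsto$ the triple. This requires reading the parametrization in Theorem \ref{thm1.2}(iii) carefully, and checking that different orderings of the triple give the same subline. The latter holds because $\mathrm{PGL}(2,p^t)$ acts on $\mathbb{F}_{p^t}\cup\{\infty\}$ and permutes $0,\infty,1$ arbitrarily. A second point to verify is that the maps $\pi_{a,b,c,d}$ really are Möbius bijections onto $U_{q+1}$ (compositions of elements of $\mathrm{PGL}(2,q^2)$ preserving the relevant structure). I would confirm this by computing $\pi(0)=c/d=x$, $\pi(\infty)=a/b=y$ and $\pi(1)=z$.
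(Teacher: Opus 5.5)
Your proposal is correct and follows the paper's route. Parts (i) and (ii) come from Theorem \ref{thm1.2} plus MDS duality, and part (iii) comes from specializing Theorem \ref{thm1.1} with $k=q-3$, $\lambda=p^t-2$ and reading off $\bigcup_{B\ni I}B=\pi_{a,b,c,d}(\mathbb{F}_{p^t}\cup\{\infty\})$ from Theorem \ref{thm1.2}(iii); your Cayley map $\sigma$ and block count only make explicit the isomorphism the paper asserts. One detail needs tightening: computing $\pi(0),\pi(\infty),\pi(1)$ alone does not give $\sigma^{-1}\circ\pi_{a,b,c,d}\in\mathrm{PGL}(2,q)$, but adding sharp $3$-transitivity of $\mathrm{PGL}(2,q^2)$ does, since then $\pi_{a,b,c,d}=\sigma\circ g_0$, where $g_0\in\mathrm{PGL}(2,q)$ sends $0,\infty,1$ to $\sigma^{-1}(x),\sigma^{-1}(y),\sigma^{-1}(z)$.
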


Evidently, Theorem \ref{thm1.3} gives an affirmative answer to Problem \ref{prob1.1}.
Furthermore, Theorems \ref{thm1.2} and \ref{thm1.3} extend and refine the results given by Wang,
Tang and Ding \cite{[WTD-IT23]} as well as those gotten by Xu, Wu, Lu and Cao \cite{[XWLC-FFA26]}.

This paper is organized as follows. In Section 2, we recall and show some preliminary lemmas
which are needed in the proof of Theorem \ref{thm1.1}, \ref{thm1.2} and \ref{thm1.3}. In Section 3,
we give the proof of Theorem \ref{thm1.1}. Then in Section 3, we present the details  of the proofs
of Theorems \ref{thm1.2} and \ref{thm1.3}. The final section is devoted to some interesting remarks.

\section{Preliminary lemmas}

In this section, we first supply some preliminary lemmas which are needed in the
proof of Theorem \ref{thm1.1}. We begin with the following remarkable result.

\begin{lem}\label{cite_complementary_design}\cite{[HP-CUP03]}
Given a $t$-$(v,k,\lambda)$ design $\mathbb{D}=(\mathbb{P},\mathbb{B})$,
let $\mathbb{B}^c$ be the set of the complements of the blocks
in $\mathbb{B}$. Then the complementary design of $\mathbb{D}$
is a $t$-$(v,v-k,\lambda_{(0,t)})$ design,
where $\lambda_{(0,t)}:=\lambda\frac{\binom{v-t}{k}}{\binom{v-t}{k-t}}$.
\end{lem}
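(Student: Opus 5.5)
The plan is to prove Lemma \ref{cite_complementary_design} by double counting. Throughout, $v=|\mathbb{P}|$ and blocks have size $k$. Every $t$-design is also an $s$-design for each $0\le s\le t$. The standard formula gives
$$\lambda_s=\lambda\frac{\binom{v-s}{t-s}}{\binom{k-s}{t-s}},$$
which we obtain by counting pairs $(T,B)$ with $S\subseteq T\subseteq B$ and $|T|=t$, for a fixed $s$-set $S$.

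\textbf{Step 1 (inclusion–exclusion).} Fix a $t$-subset $T\subseteq\mathbb{P}$. A complementary block $\mathbb{P}\setminus B$ contains $T$ exactly when $B\cap T=\emptyset$. Hence
$$\#\{B\in\mathbb{B}: B\cap T=\emptyset\}=\sum_{S\subseteq T}(-1)^{|S|}\#\{B: S\subseteq B\}=\sum_{s=0}^{t}(-1)^s\binom{t}{s}\lambda_s.$$
This count depends only on $t$, so $(\mathbb{P},\mathbb{B}^c)$ is a $t$-$(v,v-k,\mu)$ design. Distinct blocks of $\mathbb{B}$ have distinct complements, so no multiplicity issue arises.

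\textbf{Step 2 (identify $\mu$).} Rather than evaluating the alternating sum, we compute $\mu$ by a second double count. The number of complementary blocks is $b=\lambda_0=\lambda\binom{v}{t}/\binom{k}{t}$. We count pairs $(T,B')$ with $|T|=t$ and $T\subseteq B'\in\mathbb{B}^c$, giving
$$\mu\binom{v}{t}=b\binom{v-k}{t},$$
so
$$\mu=\lambda\frac{\binom{v-k}{t}}{\binom{k}{t}}.$$

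\textbf{Step 3 (match the stated form).} It remains to check that
$$\frac{\binom{v-k}{t}}{\binom{k}{t}}=\frac{\binom{v-t}{k}}{\binom{v-t}{k-t}}.$$
Expanding factorials, the left side equals $\dfrac{(v-k)!\,(k-t)!}{k!\,(v-k-t)!}$. The right side is
$$\frac{(v-t)!}{k!\,(v-t-k)!}\cdot\frac{(k-t)!\,(v-k)!}{(v-t)!},$$
which is the same expression. This completes the identification of $\lambda_{(0,t)}$.

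The only real subtlety is Step 1. Establishing that the count is independent of $T$ is the substantive point. Step 2 and Step 3 are then routine bookkeeping: Step 2 only needs $\mu$ to be constant, and Step 3 is a short factorial computation.
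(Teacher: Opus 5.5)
Your argument is correct. Note that the paper gives no proof of this lemma; it simply quotes it from Huffman--Pless, so the natural comparison is with the usual textbook derivation.

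That derivation works with the intersection numbers $\lambda_{i,j}$, the number of blocks containing a fixed $i$-set and disjoint from a fixed $j$-set, for $i+j\le t$; the notation $\lambda_{(0,t)}$ in the statement comes from this. One proves $\lambda_{i,j}=\lambda\binom{v-i-j}{k-i}/\binom{v-t}{k-t}$ by induction on $j$, using the Pascal-type recursion $\lambda_{i,j}=\lambda_{i+1,j}+\lambda_{i,j+1}$. The complementary design then has parameter $\lambda_{0,t}$. That route gives independence from the choice of $t$-set and the closed form in one stroke, and it produces all the $\lambda_{i,j}$ at once.

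Your Step 1 is essentially that recursion unrolled into a single inclusion--exclusion sum, which you use only to get constancy. You then avoid evaluating the alternating sum $\sum_{s=0}^{t}(-1)^s\binom{t}{s}\lambda_s$ by the global double count in Step 2. This is shorter and more transparent for the statement at hand, at the cost of not yielding the other intersection numbers.

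One small caveat: the factorial manipulation in Step 3 tacitly assumes $v-k\ge t$, since otherwise $(v-k-t)!$ is undefined. When $v-k<t$, both sides vanish as binomial coefficients, because complementary blocks are too small to contain any $t$-set. So the identity and the lemma still hold, but you should say so explicitly rather than expand factorials.
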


\begin{lem} {\rm (M{\"o}bius inversion formula)}
\label{Mobius inversion formula}\cite{[RPS-CUP97]}
Let $\mathrm{B}_n$ be the Boolean algebra of rank $n$.
For any $T\subseteq S$, the  M{\"o}bius function is given by $$\mu(T,S)=(-1)^{|S-T|}.$$
The M{\"o}bius inversion formula for $\mathrm{B}_n$
becomes the following statement: Let $f,g:\mathrm{B}_n\to K$.
Then it holds that
$$g(S)=\sum_{T\subseteq S}f(T)$$
if and only if it holds that
$$f(S)=\sum_{T\subseteq S}(-1)^{|S-T|}g(T).$$
\end{lem}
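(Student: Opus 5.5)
The proof is a direct substitution combined with one identity: the alternating sum of $(-1)^{|S-U|}$ over all sets $U$ with $T\subseteq U\subseteq S$ equals $1$ when $T=S$ and $0$ otherwise. This identity is exactly the statement that $\mu(T,S)=(-1)^{|S-T|}$ is the M\"obius function of $\mathrm{B}_n$, so I will first prove it and then use it twice, once for each direction of the equivalence.

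\emph{The key identity.} Fix $T\subseteq S$ and write $m=|S-T|$. The sets $U$ with $T\subseteq U\subseteq S$ correspond bijectively to the subsets $W=U-T$ of $S-T$. Under this correspondence $|S-U|=m-|W|$. Grouping the terms by $j=|W|$ gives
$$\sum_{T\subseteq U\subseteq S}(-1)^{|S-U|}=\sum_{j=0}^{m}\binom{m}{j}(-1)^{m-j}=(1-1)^m,$$
which is $1$ if $m=0$, that is $T=S$, and $0$ otherwise. The same computation applies to $\sum_{T\subseteq U\subseteq S}(-1)^{|U-T|}$, because $|U-T|=j$ and the sum is again $(1-1)^m$ up to sign.

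\emph{Forward direction.} Assume $g(S)=\sum_{T\subseteq S}f(T)$ for all $S$. Then
$$\sum_{U\subseteq S}(-1)^{|S-U|}g(U)=\sum_{U\subseteq S}(-1)^{|S-U|}\sum_{T\subseteq U}f(T).$$
Interchanging the summations over the chains $T\subseteq U\subseteq S$ turns this into
$$\sum_{T\subseteq S}f(T)\sum_{T\subseteq U\subseteq S}(-1)^{|S-U|}.$$
By the identity the inner sum vanishes unless $T=S$, so the whole expression equals $f(S)$.

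\emph{Converse direction.} Assume $f(S)=\sum_{T\subseteq S}(-1)^{|S-T|}g(T)$ for all $S$. Then
$$\sum_{U\subseteq S}f(U)=\sum_{U\subseteq S}\sum_{T\subseteq U}(-1)^{|U-T|}g(T)=\sum_{T\subseteq S}g(T)\sum_{T\subseteq U\subseteq S}(-1)^{|U-T|}=g(S),$$
where the last equality again uses the identity.

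No step is a genuine obstacle, since everything is finite and $K$ is an arbitrary abelian group or field. The only point needing care is the interchange of summations: one reindexes over pairs $(T,U)$ with $T\subseteq U\subseteq S$ and keeps track of which of $|S-U|$ or $|U-T|$ carries the sign in each direction.
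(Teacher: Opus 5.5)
The paper does not prove this lemma; it is quoted from Stanley's \emph{Enumerative Combinatorics}, so there is no in-paper argument to compare against. Your proof is correct and complete.

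The route in the cited source is the general one. Stanley proves M\"obius inversion for an arbitrary locally finite poset via the incidence algebra: $\mu$ is the inverse of $\zeta$, and $g=f\zeta$ if and only if $f=g\mu$. He then computes $\mu$ on $\mathrm{B}_n$ separately, using $\mathrm{B}_n\cong\mathbf{2}^n$ and the product theorem. You instead specialize to $\mathrm{B}_n$ from the start. You replace both steps by the single identity $\sum_{T\subseteq U\subseteq S}(-1)^{|U-T|}=(1-1)^{|S-T|}$ (and its mirror with $|S-U|$), together with an interchange of summations over pairs $T\subseteq U\subseteq S$. This is shorter and self-contained, but it does not give the inversion formula for general posets. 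That generality is not needed anywhere in the paper, which uses the lemma only on subsets of $\langle n\rangle$ in the proof of Theorem 1.1.

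Three small remarks. First, identifying $(-1)^{|S-T|}$ with the M\"obius function uses that $\mu$ is the unique function with $\mu(T,T)=1$ and $\sum_{T\subseteq U\subseteq S}\mu(T,U)=0$ for $T\subsetneq S$. This follows by induction on $|S-T|$, and your identity verifies exactly these conditions. Second, your second sum equals $(1-1)^m$ exactly, not just up to sign. Third, you correctly read both sides of the equivalence as holding for all $S$. That reading is necessary: for a single fixed $S$ the two conditions are different linear constraints. For example, with $n=1$ and $S=\{1\}$ they agree only when $f(\emptyset)=g(\emptyset)$.
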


The following lemma is equivalent to Theorem 1.4.13 in \cite{[HP-CUP03]}.
For ease of reference, we restate it here.

\begin{lem}\label{lem1}\cite{[HP-CUP03]}
Let $\mathcal{C}$ be a linear code over $\mathbb{F}_q$
with parity-check matrix $H$. For any subset
$I\subseteq\langle n\rangle$ and for any index
$j\in\langle n\rangle\setminus I$,
there exists a codeword $\mathbf{c}\in\mathcal{C}$
such that $\mathrm{supp}(\mathbf{c})\subseteq I\cup\{j\}$
and $\mathbf{c}_j\ne 0$ if and only if
$H_j\in \mathrm{span}(H_i,i\in I)$, where
$H_i$ denotes the $i$-th column of $H$.
\end{lem}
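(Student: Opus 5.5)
The plan is to work directly from the defining property of a parity-check matrix: a vector $\mathbf{c}=(c_1,\dots,c_n)\in\mathbb{F}_q^n$ lies in $\mathcal{C}$ if and only if $H\mathbf{c}^{T}=\mathbf{0}$. Written column by column, this condition reads $\sum_{i=1}^{n}c_iH_i=\mathbf{0}$. Both directions of Lemma \ref{lem1} are then a matter of translating a support condition on $\mathbf{c}$ into a linear relation among the columns of $H$.

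For the forward direction, I will take a codeword $\mathbf{c}$ with $\mathrm{supp}(\mathbf{c})\subseteq I\cup\{j\}$ and $c_j\neq 0$. Since $c_i=0$ for every $i\notin I\cup\{j\}$, the relation collapses to
\begin{align*}
c_jH_j+\sum_{i\in I}c_iH_i=\mathbf{0}.
\end{align*}
Because $c_j\neq 0$, it is invertible in $\mathbb{F}_q$, so I can solve to get $H_j=-\sum_{i\in I}c_j^{-1}c_iH_i$. This places $H_j$ in $\mathrm{span}(H_i,i\in I)$.

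For the converse, I will assume $H_j=\sum_{i\in I}a_iH_i$ with $a_i\in\mathbb{F}_q$, and define $\mathbf{c}$ by $c_j=1$, $c_i=-a_i$ for $i\in I$, and $c_i=0$ otherwise. This is well defined because $j\notin I$. Then $H\mathbf{c}^{T}=H_j-\sum_{i\in I}a_iH_i=\mathbf{0}$, so $\mathbf{c}\in\mathcal{C}$. By construction $\mathrm{supp}(\mathbf{c})\subseteq I\cup\{j\}$ and $c_j=1\neq 0$.

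There is no real obstacle here. The only points needing care are that $j\notin I$, so that setting $c_j$ does not conflict with the coordinates indexed by $I$, and that $c_j$ can be inverted, which holds because we are working over a field.
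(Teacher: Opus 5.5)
Your proof is correct and complete. Both directions follow from the identity $H\mathbf{c}^{T}=\sum_{i=1}^{n}c_iH_i$, and the only points that need checking are $j\notin I$ and the invertibility of $c_j$, as you note. The paper gives no proof of its own; it only cites Theorem 1.4.13 of Huffman--Pless, whose standard proof is this same column-relation argument, so your approach is the intended one.
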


\begin{lem}\label{lem_structure}
Let $\mathcal{C}$ be an $[n, k, d]$ linear code over $\mathbb{F}_q$ and let
$\mathcal{C}^{\bot}$ be its dual code. Let $H$ be a parity-check matrix of
$\mathcal{C}$, and denote its column vectors by $H_1, \cdots, H_n \in \mathbb{F}_q^{n-k}$.
For any subset $I\subseteq \langle n\rangle$, define the subspace
$V_0:=\mathrm{span}(\{H_i \mid i \in I\})$, and let $r:=n-k-\dim V_0$ with
$0 \le r \le n - k$. Then the following statements hold:

{\rm (i).} There exist $r$ linearly independent column vectors
$H_{i_1}, \cdots, H_{i_r}$ ($1 \le i_1 < \dots < i_r \le n$) such that
$H_{i_j} \notin \mathrm{span}(\{V_0, H_{i_1}, \cdots, H_{i_{j-1}}\})$ for all $j \in \langle r\rangle.$

{\rm (ii).} For each nonnegative integer $j \le r$, let
$V_j := \mathrm{span}(\{V_0, H_{i_1}, \cdots, H_{i_j}\})$ and
$I_j := \{ l \in \langle n\rangle \mid H_l \in V_j \setminus V_{j-1} \}$,
where $V_{-1} := \emptyset$ and $I_0 := \{ l \in \langle n\rangle \mid H_l \in V_0 \}$.
Then we have the disjoint union: $\bigcup_{i=0}^r I_i = \langle n\rangle$.

{\rm (iii).} We have
$\mathcal{C}^{\bot}(\langle n\rangle \setminus I) = \mathrm{span}(\{\mathbf{c}_1^{\bot}, \cdots, \mathbf{c}_r^{\bot}\})$
with $\mathbf{c}_1^{\bot}, \cdots, \mathbf{c}_r^{\bot} \in \mathcal{C}^{\bot}$, and for each $j \in \langle r\rangle$,
$I_j \subseteq \mathrm{supp}(\mathbf{c}_j^{\bot}) \subseteq \bigcup_{i=j}^r I_i$.
\end{lem}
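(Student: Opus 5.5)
We argue by induction on the columns of $H$, building a flag $V_0\subsetneq V_1\subsetneq\cdots\subsetneq V_r=\mathbb{F}_q^{n-k}$, and then read off dual codewords from linear functionals adapted to that flag. Since $\mathcal{C}$ is an $[n,k]$ code, $H$ has rank $n-k$, so its columns span $\mathbb{F}_q^{n-k}$.

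For (i), we proceed greedily. If $V_0\ne\mathbb{F}_q^{n-k}$, not all columns lie in $V_0$, so we pick the smallest index $i_1$ with $H_{i_1}\notin V_0$. Having chosen $i_1,\dots,i_{j-1}$, the span $V_{j-1}$ has dimension $\dim V_0+j-1<n-k$ whenever $j\le r$. Some column therefore lies outside $V_{j-1}$, and we take $i_j$ to be such an index. Each step raises the dimension by exactly one, so the process stops after exactly $r$ steps with $V_r=\mathbb{F}_q^{n-k}$. Linear independence of the chosen columns is automatic, since each lies outside the span of the previous ones. (If increasing order $i_1<\dots<i_r$ is wanted, we choose a basis of a complement of $V_0$ among the columns and relabel; only the flag matters later.)

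For (ii), the sets $V_j\setminus V_{j-1}$ for $j=0,\dots,r$ partition $V_r=\mathbb{F}_q^{n-k}$, because the subspaces are nested (with $V_{-1}=\emptyset$). Every column lies in $V_r$, so each index $l$ belongs to exactly one $I_j$.

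For (iii), we use the identification $\mathcal{C}^{\bot}=\{(f(H_1),\dots,f(H_n)) : f\in(\mathbb{F}_q^{n-k})^{*}\}$. This holds because $\mathcal{C}^{\bot}$ is the row space of $H$, and the map $f\mapsto (f(H_l))_l$ is injective since the columns span. A codeword then has support contained in $\langle n\rangle\setminus I$ exactly when $f$ vanishes on every $H_i$ with $i\in I$, that is, when $f\in V_0^{\perp}$ (the annihilator of $V_0$). Hence $\mathcal{C}^{\bot}(\langle n\rangle\setminus I)\cong V_0^{\perp}$, which has dimension $r$.

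We now choose a basis of $V_0^{\perp}$ adapted to the flag. Take the basis $H_{i_1},\dots,H_{i_r}$ of a complement of $V_0$, extended by a basis of $V_0$. Let $f_j$ be the coordinate functional with $f_j(V_0)=0$, $f_j(H_{i_m})=\delta_{jm}$, and set $\mathbf{c}_j^{\bot}:=(f_j(H_l))_l$. These span $\mathcal{C}^{\bot}(\langle n\rangle\setminus I)$.

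For the support claims, note that $f_j$ vanishes on $V_{j-1}=\mathrm{span}(V_0,H_{i_1},\dots,H_{i_{j-1}})$. So any $l\in I_m$ with $m<j$ has $H_l\in V_{j-1}$, giving $f_j(H_l)=0$; this yields $\mathrm{supp}(\mathbf{c}_j^{\bot})\subseteq\bigcup_{i\ge j}I_i$. The inclusion $I_j\subseteq\mathrm{supp}$ is the delicate point, since $f_j$ might vanish on some $H_l\in V_j\setminus V_{j-1}$. However, such an $H_l$ can be written $H_l=v+\alpha H_{i_j}$ with $v\in V_{j-1}$ and $\alpha\ne0$, so $f_j(H_l)=\alpha\ne0$. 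Thus $I_j\subseteq\mathrm{supp}(\mathbf{c}_j^{\bot})$.

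The main obstacle is exactly this adapted-basis choice: for an arbitrary basis of $V_0^{\perp}$ the lower-bound inclusion fails, and it holds here only because $f_j$ kills $V_{j-1}$ and equals $1$ on $H_{i_j}$.
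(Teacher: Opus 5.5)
Your proof is correct, but for part (iii) it takes a different route from the paper.

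\textbf{The paper's route.} The paper never constructs the dual codewords directly. It works with codewords of $\mathcal{C}$ itself, using Lemma~\ref{lem1} to produce $\mathbf{c}\in\mathcal{C}$ supported on $I\cup\{i\}$ (respectively on $\bigcup_{i=0}^{j-1}I_i\cup\{j_1,j_2\}$). Orthogonality then shows two things:
\begin{itemize}
\item every word of $\mathcal{C}^{\bot}(\langle n\rangle\setminus I)$ vanishes on $I_0$ (Claim I);
\item every word of $\mathcal{C}^{\bot}(\bigcup_{i\ge j}I_i)$ is either zero on all of $I_j$ or nonzero on all of $I_j$ (Claim II).
\end{itemize}
The paper then restricts to the coordinates in $I_j$ and argues that the image is one-dimensional. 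It peels off one dimension per layer, obtaining each $\mathbf{c}_j^{\bot}$ by an existence argument and the span from the triangular support pattern.

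\textbf{Your route and what it buys.} You instead realize $\mathcal{C}^{\bot}$ as $\{(f(H_l))_l\}$ and take the dual basis adapted to the flag $V_0\subsetneq\cdots\subsetneq V_r$. This gives you:
\begin{itemize}
\item an explicit basis of $\mathcal{C}^{\bot}(\langle n\rangle\setminus I)$;
\item the spanning property for free from $\dim V_0^{\bot}=r$;
\item an explicit description of the entries on $I_j$, namely the coefficients $\alpha$ of $H_{i_j}$.
\end{itemize}
It also sidesteps a point the paper glosses over. Claim II alone only shows that the image of the restriction map has dimension at most one. That the image is nonzero needs the separate rank count $\dim\mathcal{C}^{\bot}(\bigcup_{i\ge j}I_i)=r-j+1$.

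The paper's approach, in return, stays in code-theoretic language and proves Claims I and II for all codewords rather than for a chosen basis. Both claims are immediate in your framework as well: if $f$ vanishes on $V_{j-1}$ and $H_l=v+\alpha_l H_{i_j}$ with $l\in I_j$, then $f(H_l)=\alpha_l f(H_{i_j})$.

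\textbf{A cosmetic point in (i).} Sorting the chosen columns does change the flag $V_j$, so ``only the flag matters later'' is not quite the right justification. What is true is that your arguments for (ii) and (iii) work verbatim for any ordering of columns that are independent modulo $V_0$. In any case, choosing the smallest admissible index at every step already gives $i_1<\cdots<i_r$.
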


\begin{proof}

(i). It follows directly from the basis extension theorem in linear algebra.

(ii). It follows directly from the construction of $V_j$ and $I_j$.

(iii). Since $H$ is a parity-check matrix of the linear code $\mathcal{C}$,
we find that $\dim(\mathcal{C}^{\bot}(\langle n\rangle\setminus I))
=\dim(V_0^{\bot})=r$.

Claim I: For any codeword $\mathbf{c}^{\bot}\in\mathcal{C}^{\bot}(\langle n\rangle\setminus I)$
and index $i\in I_0$, $\mathbf{c}^{\bot}_i=0$.

Since $\mathbf{c}^{\bot}_i=0$ for any index $i\in I$, we only need to treat the indexes in $I_0\setminus I$.
For any index $i\in I_0\setminus I$, we can deduce that $H_j,j\in I$ and $H_i$ are linearly dependent.
Hence, there exists a codeword $\mathbf{c}\in \mathcal{C}$ with $\mathbf{c}_i\ne 0$ such that
$\mathrm{supp}(\mathbf{c})\subseteq I\cup\{i\}$. By the definition of the dual code, we obtain that
$$0=\langle \mathbf{c},\mathbf{c}^{\bot}\rangle
=\sum_{j\in\langle n\rangle}\mathbf{c}_j\mathbf{c}^{\bot}_j
=\mathbf{c}_i\mathbf{c}^{\bot}_i.$$
Consequently, $\mathbf{c}^{\bot}_i=0$ as $\mathbf{c}_i\ne 0$.
Therefore, by the arbitrariness, we have $\mathbf{c}^{\bot}_i=0$
for any index $i\in I_0\setminus I$. So Claim I is proved.

Claim II: For any codeword $$\mathbf{c}^{\bot}\in\mathcal{C}^{\bot}\Big(\bigcup_{i=j}^{r}I_i\Big)
=\mathcal{C}^{\bot}\Big(\langle n\rangle\setminus
\bigcup_{i=0}^{j-1}I_i\Big), j\in\langle r\rangle,$$
the components $\{ \mathbf{c}^{\bot}_i \mid i \in I_j \}$ are either
all zero or all nonzero.

If $\left|I_j\right|=1$, Claim II
is trivial. So we consider when $\left|I_j\right|>1$.
Suppose there exists an index $j_1\in I_j$ such that
$\mathbf{c}^{\bot}_{j_1}\ne 0$. Now for any index $j_2\in I_j\setminus\{j_1\}$,
applying Lemma \ref{lem1} to $H_{j_2}\in\mathrm{span}(\{H_{j_1}\}\cup\{H_i,i\in \bigcup_{i=0}^{j-1}I_i\})$
and $H_{j_1}\notin\mathrm{span}(\{H_i,i\in\bigcup_{i=0}^{j-1}I_i\})$
gives us that there is a nonzero codeword $\mathbf{c}\in \mathcal{C}(\bigcup_{i=0}^{j-1}I_i\cup\{j_1,j_2\})$
such that $\mathbf{c}_{j_1}\ne 0$ and $\mathbf{c}_{j_2}\ne 0$.
It follows that $$0=\langle \mathbf{c},\mathbf{c}^{\bot}\rangle
=\sum_{j\in\langle n\rangle}\mathbf{c}_j\mathbf{c}^{\bot}_j
=\mathbf{c}_{j_1}\mathbf{c}^{\bot}_{j_1}+\mathbf{c}_{j_2}
\mathbf{c}^{\bot}_{j_2},$$
which yields that
$$\mathbf{c}^{\bot}_{j_2}=-\frac{\mathbf{c}_{j_1}\mathbf{c}^{\bot}_{j_1}}{\mathbf{c}_{j_2}}\ne 0$$
as $\mathbf{c}_{j_1},\mathbf{c}_{j_2}$ and $\mathbf{c}^{\bot}_{j_1}$ are all nonzero.
Therefore, by the arbitrariness, we have $\mathbf{c}^{\bot}_i\ne 0$
for any index $i\in I_j$. Thus Claim II is proved.

Finally, we complete the proof of Lemma \ref{lem_structure}.
For any index $j\in\langle r\rangle$, consider the map
$$\phi:\mathcal{C}^{\bot}\Big(\bigcup_{i=j}^{r}I_i\Big)\to
\mathbb{F}_q^{|I_j|},\mathbf{c}^{\bot}\mapsto\mathbf{c}^{\bot}\mid_{I_j},$$
where $\mathbf{c}^{\bot}\mid_{I_j}$ denotes the restriction of the dual codeword
$\mathbf{c}^{\bot}$ to the coordinate set $I_j$. By Claim II, $\mathrm{Im}(\phi)$
must be a one-dimensional linear space over $\mathbb{F}_q$. This means that
$$\left| \frac{\mathcal{C}^{\bot}(\bigcup_{i=j}^{r}I_i)}
{\ker(\phi)}\right|=\left|\mathrm{Im}(\phi)\right|=q.$$
We then derive that
$$\dim\Big(\mathcal{C}^{\bot}\Big(\bigcup_{i=j+1}^{r}I_i\Big)\Big)=\dim(\ker(\phi))
=\dim\Big(\mathcal{C}^{\bot}\Big(\bigcup_{i=j}^{r}I_i\Big)\Big)-1,$$
and there exists a codeword $\mathbf{c}^{\bot}_j$ such that
$$I_j\subseteq\mathrm{supp}(\mathbf{c}^{\bot}_j)\subseteq\bigcup_{i=j}^{r}I_i.$$
From the linear independence of $\mathbf{c}^{\bot}_j,j\in\langle r\rangle$,
one can deduce that $$\mathcal{C}^{\bot}(\langle n\rangle\setminus I)
=\mathrm{span}(\{\mathbf{c}^{\bot}_1,\cdots,\mathbf{c}^{\bot}_r\}).$$
So part (iii) is proved.

This completes the proof of Lemma \ref{lem_structure}.
\end{proof}

In the rest of this section, we present several lemmas that are needed
in the proofs of Theorems \ref{thm1.2} and \ref{thm1.3}.

\begin{lem} \label{singleton} \cite{[RS-IT64]}
{\rm (Singleton bound)} For a linear $[n, k, d]$ code
$\mathcal{C}$ over $\mathbb{F}_q$, we have
$d\leq n-k+1.$
\end{lem}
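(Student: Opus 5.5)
The Singleton bound is classical, and I would prove it directly by a puncturing (projection) argument. Let $\mathcal{C}$ be an $[n,k,d]$ linear code over $\mathbb{F}_q$ and put $I:=\{1,\dots,d-1\}\subseteq\langle n\rangle$. I would consider the linear map
$$\psi:\mathcal{C}\to\mathbb{F}_q^{\,n-d+1},\qquad \mathbf{c}=(c_1,\dots,c_n)\mapsto(c_d,c_{d+1},\dots,c_n),$$
which deletes the first $d-1$ coordinates.

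The key step is to show that $\psi$ is injective. Suppose $\psi(\mathbf{c})=\mathbf{0}$. Then $\mathrm{supp}(\mathbf{c})\subseteq I$, so $w(\mathbf{c})\le d-1<d$. Since $d$ is the minimum distance and $\mathcal{C}$ is linear, the only codeword of weight less than $d$ is $\mathbf{0}$. Hence $\mathbf{c}=\mathbf{0}$, and $\ker\psi=\{\mathbf{0}\}$. In the language of the paper, this says exactly that the subset code $\mathcal{C}(I)$ is trivial whenever $|I|\le d-1$.

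Injectivity gives $k=\dim\mathcal{C}=\dim\psi(\mathcal{C})\le n-d+1$, which rearranges to $d\le n-k+1$. This completes the plan.

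There is no real obstacle here. The only points needing care are that linearity is used to identify the minimum distance with the minimum nonzero weight, and the degenerate edge case $d-1\ge n$. That case cannot occur for $k\ge1$, since a nonzero code has some codeword of weight at most $n$.

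An equivalent route, closer to Lemma \ref{lem1} and Lemma \ref{lem_structure}, goes through the parity-check matrix $H$. Any $n-k+1$ columns of $H$ lie in $\mathbb{F}_q^{n-k}$ and are therefore linearly dependent. A nontrivial dependency gives a nonzero codeword of weight at most $n-k+1$, so again $d\le n-k+1$. I would present the projection argument as the main proof and could mention this as a remark.
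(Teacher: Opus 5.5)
Your puncturing argument is correct and complete, and the edge-case remark is handled properly. The paper gives no proof of this lemma and simply cites Singleton's 1964 paper, whose original argument is essentially yours: distinct codewords still differ after deleting $d-1$ coordinates, so $|\mathcal{C}|\le q^{n-d+1}$ for any code, linear or not. Your version instead uses linearity to reduce to showing $\ker\psi=\{\mathbf{0}\}$, which is equally valid here.
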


\begin{lem} \label{lem_MDS}\cite{[RMR-CUP06]}
Let $\mathcal{C}$ be a linear code over $\mathbb{F}_q$.
Then $\mathcal{C}$ is an MDS code if and only if the dual
code $\mathcal{C}^{\bot}$ of $\mathcal{C}$ is an MDS code.
\end{lem}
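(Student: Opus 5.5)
The plan is to reduce both conditions in Lemma \ref{lem_MDS} to one statement about square submatrices of a parity-check matrix $H$ of $\mathcal{C}$. Since $(\mathcal{C}^{\bot})^{\bot}=\mathcal{C}$, it is enough to prove one direction: if $\mathcal{C}$ is MDS, then $\mathcal{C}^{\bot}$ is MDS. Applying that implication to $\mathcal{C}^{\bot}$ gives the converse.

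Let $\mathcal{C}$ have parameters $[n,k,d]$, and let $H$ be an $(n-k)\times n$ parity-check matrix of $\mathcal{C}$ with columns $H_1,\dots,H_n$. Then $H$ is a generator matrix of $\mathcal{C}^{\bot}$, which has dimension $n-k$. The key equivalence is
\[
d(\mathcal{C})\ge n-k+1 \iff \text{every } (n-k)\times(n-k) \text{ submatrix of } H \text{ is nonsingular} \iff d(\mathcal{C}^{\bot})\ge k+1 .
\]
For the first equivalence, use the standard fact that $d(\mathcal{C})$ is the smallest number of linearly dependent columns of $H$. This is essentially Lemma \ref{lem1}: a codeword supported in a set $J$ is exactly a dependency among the columns $H_j$, $j\in J$. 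Hence $d\ge n-k+1$ holds iff every $n-k$ columns of $H$ are linearly independent, i.e.\ iff every square $(n-k)\times(n-k)$ submatrix of $H$ is invertible.

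For the second equivalence, every codeword of $\mathcal{C}^{\bot}$ has the form $\mathbf{x}H$ with $\mathbf{x}\in\mathbb{F}_q^{n-k}$, and $\mathbf{x}H\neq\mathbf{0}$ whenever $\mathbf{x}\neq\mathbf{0}$ because $H$ has full row rank. Such a codeword has weight at most $k=n-(n-k)$ iff it vanishes on some set $J$ of $n-k$ coordinates. That happens iff $\mathbf{x}H_J=\mathbf{0}$ for the square submatrix $H_J$ formed by those columns, and for some nonzero $\mathbf{x}$ this holds iff $H_J$ is singular. So $\mathcal{C}^{\bot}$ has no nonzero codeword of weight $\le k$ exactly when all such $H_J$ are nonsingular.

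To conclude, if $\mathcal{C}$ is MDS then $d(\mathcal{C})=n-k+1$, so by the equivalence $d(\mathcal{C}^{\bot})\ge k+1$. The Singleton bound (Lemma \ref{singleton}) applied to the $[n,n-k]$ code $\mathcal{C}^{\bot}$ gives $d(\mathcal{C}^{\bot})\le k+1$. Hence equality holds and $\mathcal{C}^{\bot}$ is MDS. There is no real obstacle here; the one point needing care is checking that $\mathbf{x}\neq\mathbf{0}$ gives a nonzero codeword, which is where the full rank of $H$ is used.
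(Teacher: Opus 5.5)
Your proof is correct. The chain
\[
d(\mathcal{C})\ge n-k+1 \iff \text{every } (n-k)\times(n-k) \text{ submatrix of } H \text{ is nonsingular} \iff d(\mathcal{C}^{\bot})\ge k+1,
\]
closed off by the Singleton bound for $\mathcal{C}^{\bot}$, is the standard textbook argument. The paper gives no proof of this lemma of its own; it only cites it from the literature (Roth's book).

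Since your chain consists of equivalences, it already gives both directions at once. The reduction via $(\mathcal{C}^{\bot})^{\bot}=\mathcal{C}$ is therefore harmless but not needed.
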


The following Steiner systems were constructed by Witt \cite{[W-AMSUH37]},
and are called {\it Witt spherical geometry designs}.
But for ease of reference, we adopt the version stated in \cite{[LDMTT-IT22]} here.

\begin{lem}\cite{[LDMTT-IT22]}\cite{[W-AMSUH37]}
Let $\mathrm{PGL}(2,q)$ be the projective linear group of degree two.
Put $\mathbb{B}:=\mathrm{PG}(1,q)$ and $\mathrm{Orb}_{\mathbb{B}}
:=\{ g\mathbb{B}\mid g\in \mathrm{PGL}(2,q^m)\}$. Let
$\mathbb{D}:=(\mathbb{B},\mathrm{Orb}_{\mathbb{B}})$.
Then $\mathbb{D}$ is a Steiner system $\mathrm{S}(3,q+1,q^m+1)$.
\end{lem}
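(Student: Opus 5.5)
The plan is to exploit the sharp $3$-transitivity of the projective linear group on the projective line. Write $Q:=q^m$ and regard $\mathrm{PG}(1,Q)=\mathbb{F}_{Q}\cup\{\infty\}$ as the point set, with $\mathrm{PGL}(2,Q)$ acting by fractional linear maps $x\mapsto\frac{ax+b}{cx+d}$, $ad-bc\ne0$. The base block is $\mathbb{B}=\mathrm{PG}(1,q)=\mathbb{F}_q\cup\{\infty\}\subseteq\mathrm{PG}(1,Q)$. Since each $g$ is a bijection of $\mathrm{PG}(1,Q)$, every block $g\mathbb{B}$ has exactly $q+1$ points. So $\mathbb{D}$ is an incidence structure on $Q+1=q^m+1$ points with blocks of size $q+1$. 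It remains to show that every $3$-subset of points lies in exactly one block.

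First I will recall the standard fact that $\mathrm{PGL}(2,F)$ acts sharply $3$-transitively on $F\cup\{\infty\}$ for any field $F$. Given distinct $x,y,z$, the map
$$u\mapsto\frac{(u-x)(y-z)}{(u-z)(y-x)}$$
(suitably modified when one of $x,y,z$ is $\infty$) is the unique element sending $(x,y,z)$ to $(0,1,\infty)$. Applying this with $F=\mathbb{F}_Q$ gives the following: for any $3$-subset $T$ there is $g_0\in\mathrm{PGL}(2,Q)$ with $g_0T=\{0,1,\infty\}$. Because $g_0$ permutes the orbit $\mathrm{Orb}_{\mathbb{B}}$, the number of blocks containing $T$ equals the number containing $\{0,1,\infty\}$. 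Existence is clear, since $\mathbb{B}$ itself contains $0,1,\infty$.

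For uniqueness, suppose $g\mathbb{B}\ni 0,1,\infty$. Then $g^{-1}(0),g^{-1}(1),g^{-1}(\infty)$ are three distinct points of $\mathbb{B}=\mathrm{PG}(1,q)$. Applying sharp $3$-transitivity with $F=\mathbb{F}_q$, there is $h\in\mathrm{PGL}(2,q)$ with $h(0)=g^{-1}(0)$, $h(1)=g^{-1}(1)$ and $h(\infty)=g^{-1}(\infty)$. Viewing $h$ inside $\mathrm{PGL}(2,Q)$, the element $gh$ fixes $0,1,\infty$. Sharpness over $\mathbb{F}_Q$ then forces $gh=\mathrm{id}$, so $g=h^{-1}\in\mathrm{PGL}(2,q)$. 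Since $\mathrm{PGL}(2,q)$ stabilizes $\mathbb{F}_q\cup\{\infty\}$, we get $g\mathbb{B}=\mathbb{B}$. Hence exactly one block contains $\{0,1,\infty\}$, and therefore every $3$-subset lies in exactly one block, so $\mathbb{D}$ is an $\mathrm{S}(3,q+1,q^m+1)$.

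The only delicate step is the embedding argument. One must check that $\mathrm{PGL}(2,q)\hookrightarrow\mathrm{PGL}(2,Q)$ is injective, which holds because scalar matrices over $\mathbb{F}_Q$ that lie in $\mathrm{GL}(2,q)$ are scalars over $\mathbb{F}_q$. One must also check that the stabilizer of $\mathbb{B}$ is exactly $\mathrm{PGL}(2,q)$. As a consistency check, the number of blocks is
$$\frac{|\mathrm{PGL}(2,Q)|}{|\mathrm{PGL}(2,q)|}=\frac{(Q+1)Q(Q-1)}{(q+1)q(q-1)}=\binom{Q+1}{3}\Big/\binom{q+1}{3},$$
as required for a Steiner system.
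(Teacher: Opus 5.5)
The paper gives no proof of this lemma. It is quoted from Witt and from Liu--Ding--Mesnager--Tang--Tonchev, and it is used only as a black box in the proof of Theorem~\ref{thm1.3}, to recognize ${\mathbb{B}^{\bot}}'$ as a Witt design. So there is no proof to compare against. Your argument is the standard one, and it is correct and complete:
\begin{itemize}
\item $3$-transitivity of $\mathrm{PGL}(2,q^m)$ reduces the question to the triple $\{0,1,\infty\}$.
\item Existence is witnessed by $\mathbb{B}$ itself.
\item For uniqueness, any $g$ with $\{0,1,\infty\}\subseteq g\mathbb{B}$ equals $h^{-1}$ for some $h\in\mathrm{PGL}(2,q)$, by sharpness over $\mathbb{F}_{q^m}$, so $g\mathbb{B}=\mathbb{B}$.
\end{itemize}

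Two small remarks.

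First, you rightly take the point set to be $\mathrm{PG}(1,q^m)$. The statement as printed writes $\mathbb{D}=(\mathbb{B},\mathrm{Orb}_{\mathbb{B}})$ with $\mathbb{B}=\mathrm{PG}(1,q)$ as the point set, which is a typo.

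Second, the two points you flag as delicate are not needed for the Steiner property. You only use that $h$ acts on $\mathrm{PG}(1,q^m)$ extending its action on $\mathrm{PG}(1,q)$, and hence stabilizes $\mathbb{B}$. Both facts matter only for your block count, and both follow from sharpness anyway:
\begin{itemize}
\item The stabilizer statement follows from your uniqueness argument applied to any $g$ with $g\mathbb{B}=\mathbb{B}$.
\item Injectivity of $\mathrm{PGL}(2,q)\to\mathrm{PGL}(2,q^m)$ follows because an element acting trivially on $\mathrm{PG}(1,q^m)$ fixes $0,1,\infty$ in $\mathrm{PG}(1,q)$.
\end{itemize}
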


\begin{lem}\label{U_q+1_PG(1,q)}\cite{[LDMTT-IT22]}
The map $\mu:\mathbb{F}_q\cup\{\infty\}\to U_{q+1},\mu(x)=\frac{u_0x+1}{x+u_0}$ is a bijection,
where $u_0\in U_{q+1}\setminus\{\pm 1\}$.
\end{lem}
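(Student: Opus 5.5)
The plan is to verify directly that $\mu$ maps $\mathbb{F}_q\cup\{\infty\}$ into $U_{q+1}$, show it is injective, and finish by counting. Throughout, $U_{q+1}=\{u\in\mathbb{F}_{q^2}\mid u^{q+1}=1\}$, and $\mu$ is read as a M\"obius transformation on $\mathbb{F}_{q^2}\cup\{\infty\}$, so that $\mu(\infty)=u_0$.

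First, I would check that $\mu$ is well defined on $\mathbb{F}_q$, i.e.\ that no denominator and no numerator vanishes.
\begin{itemize}
\item If $x+u_0=0$ for some $x\in\mathbb{F}_q$, then $u_0=-x\in\mathbb{F}_q$. Hence $u_0^q=u_0$, and $u_0^{q+1}=1$ gives $u_0^2=1$, i.e.\ $u_0=\pm1$, which is excluded.
\item If $u_0x+1=0$, then $x\neq 0$ and $u_0=-x^{-1}\in\mathbb{F}_q$, giving the same contradiction.
\end{itemize}
So for $x\in\mathbb{F}_q$, $\mu(x)$ is a well-defined nonzero element of $\mathbb{F}_{q^2}$.

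Next, I would show that the image lies in $U_{q+1}$. For $x\in\mathbb{F}_q$ we have $x^q=x$ and $u_0^q=u_0^{-1}$. Since Frobenius is a field automorphism,
$$\mu(x)^q=\frac{u_0^{-1}x+1}{x+u_0^{-1}}=\frac{x+u_0}{u_0x+1}=\mu(x)^{-1},$$
so $\mu(x)^{q+1}=1$. Together with $\mu(\infty)=u_0\in U_{q+1}$, this gives $\mu(\mathbb{F}_q\cup\{\infty\})\subseteq U_{q+1}$.

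For injectivity, the matrix $\begin{pmatrix}u_0&1\\1&u_0\end{pmatrix}$ has determinant $u_0^2-1\neq0$, again because $u_0\neq\pm1$. A M\"obius transformation with nonzero determinant is a bijection of $\mathbb{F}_{q^2}\cup\{\infty\}$, so its restriction to $\mathbb{F}_q\cup\{\infty\}$ is injective.

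Finally, both sets have exactly $q+1$ elements: $U_{q+1}$ is the unique subgroup of order $q+1$ in the cyclic group $\mathbb{F}_{q^2}^{\ast}$. An injective map between finite sets of equal size is a bijection, which proves the lemma.

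The only delicate point is the non-vanishing of the denominators and of the determinant. Each of these reduces to the single observation that $u_0\notin\mathbb{F}_q$, because $U_{q+1}\cap\mathbb{F}_q^{\ast}=\{\pm1\}$.
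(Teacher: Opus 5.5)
Your proof is correct and complete. The paper does not prove this lemma at all; it quotes it from Liu, Ding, Mesnager, Tang and Tonchev and uses it only to identify $U_{q+1}$ with $\mathrm{PG}(1,q)$ in the proof of Lemma~\ref{lem3.3}. So there is no argument in the paper to compare against.

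Your route is the standard self-contained verification:
\begin{itemize}
\item numerator and denominator do not vanish, because $U_{q+1}\cap\mathbb{F}_q^{\ast}=\{\pm1\}$;
\item the Frobenius identity $\mu(x)^q=\mu(x)^{-1}$ puts the image in $U_{q+1}$;
\item the determinant $u_0^2-1\neq0$ gives injectivity;
\item a cardinality count finishes the proof.
\end{itemize}

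If you want to avoid the counting step, you can invert $\mu$ explicitly. For $u\in U_{q+1}\setminus\{u_0\}$ the preimage is $x=(1-u_0u)/(u-u_0)$. The same Frobenius trick gives $x^q=x$, so $x\in\mathbb{F}_q$, and you also get $\mu^{-1}$ in closed form.

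Your Frobenius computation is also exactly what justifies the paper's later claim in Lemma~\ref{lem3.3} that $\pi_{a,b,c,d}$ sends $\mathbb{F}_q\cup\{\infty\}$ into $U_{q+1}$. There $a,b,c,d\in\mathbb{F}_{q^2}$, so Lemma~\ref{PGL2} does not apply verbatim. A calculation of the same kind gives $\pi_{a,b,c,d}(\xi)^q=\pi_{a,b,c,d}(\xi)^{-1}$ for $\xi\in\mathbb{F}_q$ directly.
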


\begin{lem}\label{PGL2}\cite{[DT-WCP22]}
The projective general linear group $\mathrm{PGL}(2,q)$ consists of
all the following permutations of the set $\{\infty\}\cup\mathbb{F}_q$:
$$\mathrm{\pi}_{(a,b,c,d)}(x)=\frac{ax+c}{bx+d}$$ with $ad-bc\ne 0$,
and the conventions above about $\infty$. The corresponding group
action is triply transitive.
\end{lem}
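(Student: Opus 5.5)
The plan is to realise $\mathrm{PGL}(2,q)=\mathrm{GL}(2,q)/\{\text{scalars}\}$ through its natural action on the projective line $\mathrm{PG}(1,q)$, the set of one-dimensional subspaces of $\mathbb{F}_q^2$. Identify the point $[x:1]$ with $x\in\mathbb{F}_q$ and $[1:0]$ with $\infty$. A matrix $\begin{pmatrix} a & c\\ b & d\end{pmatrix}$ with $ad-bc\ne 0$ sends $[x:1]$ to $[ax+c:bx+d]$.

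\textbf{Step 1: every group element is a Möbius map.} Reading off coordinates gives $x\mapsto \frac{ax+c}{bx+d}$, with the conventions $\frac{a}{0}$ (for $a\ne0$) equal to $\infty$, $\infty\mapsto \frac{a}{b}$ (or $\infty$ if $b=0$), and $x=-d/b\mapsto\infty$. These conventions are forced by the projective coordinates, so this step is mechanical.

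\textbf{Step 2: the correspondence is exact.} The condition $ad-bc\ne0$ is exactly invertibility of the matrix, so each such map is a permutation of $\{\infty\}\cup\mathbb{F}_q$. Conversely, every element of $\mathrm{PGL}(2,q)$ arises in this way.

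To show that distinct classes give distinct permutations, compute the kernel of the action. Suppose a matrix fixes $0$, $\infty$ and $1$. Fixing $0$ forces $c=0$, fixing $\infty$ forces $b=0$, and fixing $1$ forces $a=d$. Hence the matrix is scalar. So the kernel is precisely the scalar matrices, and $\mathrm{PGL}(2,q)$ is faithfully identified with the set of maps $\pi_{(a,b,c,d)}$.

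\textbf{Step 3: triple transitivity.} It suffices to send an arbitrary ordered triple $(x,y,z)$ of distinct points to $(\infty,0,1)$, and then compose one such map with the inverse of another. For $x,y,z\in\mathbb{F}_q$, use the cross-ratio map
$$\pi(u)=\frac{(x-z)(u-y)}{(y-z)(u-x)}=\frac{(x-z)u-(x-z)y}{(y-z)u-(y-z)x}.$$
Its determinant is $-(x-z)(y-z)x+(x-z)(y-z)y=(x-z)(y-z)(y-x)\ne0$. One checks directly that $\pi(x)=\infty$, $\pi(y)=0$ and $\pi(z)=1$.

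When one of the three points is $\infty$, the factor containing it is deleted, as in the usual cross-ratio conventions. For example, if $x=\infty$ take $u\mapsto\frac{u-y}{z-y}$. The other two cases are treated in the same way, each time checking that the determinant is nonzero.

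\textbf{Optional check: sharp transitivity.} By Step 2, the stabiliser of $(\infty,0,1)$ is trivial, so the action is in fact sharply $3$-transitive. This is consistent with the count $|\mathrm{PGL}(2,q)|=q(q^2-1)=(q+1)q(q-1)$, which equals the number of ordered triples of distinct points.

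\textbf{Main obstacle.} I expect the only real difficulty to be the bookkeeping of the $\infty$ conventions in Steps 1 and 3, that is, verifying case by case that the cross-ratio maps with a point at infinity still have nonzero determinant and send the triple to the right target. The group-theoretic content itself is routine.
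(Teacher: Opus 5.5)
Your proof is correct, and it is the standard argument. The paper gives no proof of this lemma; it only quotes it from \cite{[DT-WCP22]}, so there is no competing route to compare with. Step 1 is fine. Step 2 is also fine: a matrix fixing $0,\infty,1$ is scalar, so $\mathrm{GL}(2,q)/\{\text{scalars}\}$ acts faithfully and is exactly the group of maps $\pi_{(a,b,c,d)}$. The cross-ratio map, with determinant $(x-z)(y-z)(y-x)$, is correct, and so is the count $q(q^2-1)=(q+1)q(q-1)$ giving sharp $3$-transitivity.

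One small correction to the bookkeeping you flagged. The rule ``delete the factors containing $\infty$'', applied literally to $\frac{(x-z)(u-y)}{(y-z)(u-x)}$, gives the wrong sign in two of the three cases. The reason is that the two deleted factors have ratio tending to $-1$, not $1$.
\begin{itemize}
\item For $x=\infty$, literal deletion gives $u\mapsto\frac{u-y}{y-z}$, which sends $z$ to $-1$. Your explicit choice $u\mapsto\frac{u-y}{z-y}$ is the correct one.
\item For $y=\infty$, you need $u\mapsto\frac{z-x}{u-x}$, not $\frac{x-z}{u-x}$. Its determinant is $x-z\neq 0$.
\item For $z=\infty$, literal deletion gives $u\mapsto\frac{u-y}{u-x}$, which is fine.
\end{itemize}
Since you say you will check the images case by case, this is not a gap.

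You can avoid the case analysis entirely. Write three distinct points as $[v_1],[v_2],[v_3]$ with $v_3=\alpha v_1+\beta v_2$. Distinctness forces $\alpha\beta\neq 0$. Then the invertible matrix with columns $\alpha v_1$ and $\beta v_2$ sends $\infty=[e_1]$, $0=[e_2]$ and $1=[e_1+e_2]$ to $[v_1],[v_2],[v_3]$ respectively.
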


\begin{lem}\label{XWLC-IT25}\cite{[XWLC-IT25]}
The minimum distance $d$ of the BCH code $\mathcal{C}_{(q, q+1, 3, h)}$
is equal to $3$ if and only if $\gcd(2h+1,q+1)>1$.
Equivalently, $d\ge 4$ if and only if $\gcd(2h+1,q+1)=1$.
\end{lem}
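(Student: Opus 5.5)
The plan is to work directly with the defining zeros of the code and show that a weight-$3$ codeword exists exactly when the power map $u\mapsto u^{2h+1}$ on $U_{q+1}$ fails to be injective. Let $\beta$ be a primitive $(q+1)$-th root of unity in $\mathbb{F}_{q^2}$, so $U_{q+1}=\langle\beta\rangle$. We use the standard description of $\mathcal{C}_{(q,q+1,3,h)}$ as the cyclic code with zeros $\beta^{h},\beta^{h+1}$ together with their $q$-cyclotomic conjugates $\beta^{-h},\beta^{-h-1}$. Thus $\mathbf{c}\in\mathbb{F}_q^{q+1}$, indexed by $u\in U_{q+1}$, lies in the code iff
$$\sum_{u}c_u u^{h}=0\quad\text{and}\quad\sum_u c_u u^{h+1}=0 .$$
The two conjugate equations are automatic: raise these to the $q$-th power and use $c_u^q=c_u$ and $u^q=u^{-1}$. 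Since the zeros include the two consecutive exponents $h,h+1$, the BCH bound gives $d\ge 3$. So it suffices to decide when a codeword of weight exactly $3$ exists.

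Fix three distinct $u_1,u_2,u_3\in U_{q+1}$. I will find all solutions $(c_1,c_2,c_3)\in\mathbb{F}_{q^2}^3$ of the $2\times 3$ system $\sum c_iu_i^h=\sum c_iu_i^{h+1}=0$. This matrix has rank $2$, since its $2\times2$ minors are $u_i^hu_j^h(u_j-u_i)\neq0$. Its kernel is therefore spanned by the cross product, which after cancelling a common factor is
$$c_i=u_i^{-h}(u_k-u_j),\qquad (i,j,k)\ \text{a cyclic permutation of } (1,2,3).$$
All $c_i$ are nonzero because the $u_i$ are distinct. Hence a weight-$3$ codeword supported on $\{u_1,u_2,u_3\}$ exists iff some scalar multiple of this vector lies in $\mathbb{F}_q^3$. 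Equivalently, the ratios $c_1/c_2$ and $c_1/c_3$ must lie in $\mathbb{F}_q$, i.e.\ satisfy $w^{q-1}=1$.

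The key computation is this ratio test. For $u,v\in U_{q+1}$ we have
$$(v-u)^q=v^{-1}-u^{-1}=-\frac{v-u}{uv}.$$
From this,
$$\Big(\frac{u_3-u_2}{u_1-u_3}\Big)^{q-1}=\frac{u_1}{u_2}.$$
Also $u^{q-1}=u^{-2}$ on $U_{q+1}$, so
$$(u_1^{-h}u_2^{h})^{q-1}=(u_1/u_2)^{2h}.$$
Multiplying, $(c_1/c_2)^{q-1}=(u_1/u_2)^{2h+1}$, and symmetrically for the other ratios. Therefore a weight-$3$ codeword on $\{u_1,u_2,u_3\}$ exists iff
$$u_1^{2h+1}=u_2^{2h+1}=u_3^{2h+1}.$$
This step is the main place where care is needed: the signs and the identity $u^q=u^{-1}$ must be tracked correctly.

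It remains to translate this into a gcd condition. The map $u\mapsto u^{2h+1}$ on the cyclic group $U_{q+1}$ has all fibres of size $g=\gcd(2h+1,q+1)$. If $g=1$, no two distinct $u_i$ share an image, so $d\ge4$. If $g>1$, then $g$ is odd because $2h+1$ is odd, so $g\ge3$. Hence some fibre contains three distinct elements, which yields a weight-$3$ codeword, and with $d\ge3$ this gives $d=3$. The stated equivalence with $d\ge4$ follows immediately.
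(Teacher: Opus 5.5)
Your proof is correct and complete. The rank-$2$ kernel computation, the ratio identity $(c_1/c_2)^{q-1}=(u_1/u_2)^{2h+1}$, and the parity remark that $g=\gcd(2h+1,q+1)>1$ forces $g\ge 3$ all check out.

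The paper gives no proof of this lemma to compare against. It is quoted from \cite{[XWLC-IT25]} and used once, as a black box, to get $d\ge 4$ at the start of the proof of Theorem~\ref{thm1.2}; there $\gcd(2h+1,q+1)=\gcd(p^{m'},q+1)=1$.

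The nearest in-paper analogue is the weight-$4$ analysis in Lemmas~\ref{lem3.1}--\ref{lem3.3}. There the authors replace $\mathbb{F}_q$-solutions of the two-row system by $\mathbb{F}_{q^2}$-solutions of a four-row system, adjoining the conjugate rows with exponents $-(h+1),-h$. They then reduce a $4\times 4$ determinant to a cross-ratio condition.

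You do the Galois descent by hand instead. You write the kernel line of the $2\times 3$ system explicitly and test whether it is defined over $\mathbb{F}_q$ via $w^{q-1}=1$ on the coordinate ratios. This is shorter and avoids all determinant bookkeeping.

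Your route also gives more than the lemma states. It identifies the supports of weight-$3$ codewords exactly: they are the $3$-subsets of cosets of the order-$g$ subgroup of $U_{q+1}$. Hence $A_3=(q-1)\frac{q+1}{g}\binom{g}{3}$.
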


\begin{lem}\label{lem3.1}
Let $q$ be a positive power of a prime $p$.
Let $x_i\in U_{q+1}\subseteq \mathbb{F}_{q^{2}}$, $i\in\langle t\rangle$, be
arbitrary $t$ pairwise distinct elements and $r,s\in\mathbb{Z^{+}}$. Then the dimension of the solution space of
the system of homogeneous linear equations
\begin{align}\label{eq1}
\left(\begin{array}{cccc}
 x_{1}^{r} & x_{2}^{r} & \cdots & x_{t}^{r} \\
 x_{1}^{s} & x_{2}^{s} & \cdots & x_{t}^{s} \\
  \end{array}
\right){\bf X} = 0
\end{align}
over ${\mathbb{F}_{q}}$
is equal to the dimension of the solution space of the system of homogeneous linear equations
\begin{align}\label{eq2}
\left(\begin{array}{cccc}
 x_{1}^{r} & x_{2}^{r} & \cdots & x_{t}^{r} \\
 x_{1}^{s} & x_{2}^{s} & \cdots & x_{t}^{s} \\
 x_{1}^{-s} & x_{2}^{-s} & \cdots & x_{t}^{-s} \\
 x_{1}^{-r} & x_{2}^{-r} & \cdots & x_{t}^{-r} \\
  \end{array}
\right){\bf Y} = 0
\end{align}
over ${\mathbb{F}_{q^{2}}}$.
\end{lem}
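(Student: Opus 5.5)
The plan is to compare the $\mathbb{F}_q$-solution space of \eqref{eq1} with the $\mathbb{F}_{q^2}$-solution space of \eqref{eq2} through the Frobenius map $\sigma:y\mapsto y^q$. The key fact is that every $x\in U_{q+1}$ satisfies $x^q=x^{-1}$.

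First we rewrite \eqref{eq1}. Over $\mathbb{F}_q$ it consists of the two equations $\sum_i x_i^r X_i=0$ and $\sum_i x_i^s X_i=0$, where the coefficients lie in $\mathbb{F}_{q^2}$. For $X\in\mathbb{F}_q^t$, applying $\sigma$ to these equations gives $\sum_i x_i^{-r}X_i=0$ and $\sum_i x_i^{-s}X_i=0$. Hence the $\mathbb{F}_q$-solutions of \eqref{eq1} are exactly the $\mathbb{F}_q$-rational solutions of \eqref{eq2}. Call this $\mathbb{F}_q$-space $W$, and let $V\subseteq\mathbb{F}_{q^2}^t$ be the solution space of \eqref{eq2}.

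Next we show that $V$ is stable under coordinatewise $\sigma$. If $Y\in V$, then applying $\sigma$ to the row equation with exponent $r$ gives $\sum_i x_i^{-r}Y_i^q=0$, which is the fourth row evaluated at $\sigma(Y)$. Similarly, the second row goes to the third, and the third and fourth go back to the second and first. So the row set $\{r,s,-s,-r\}$ is permuted by $\sigma$, and $\sigma(V)=V$.

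Finally we apply Galois descent (Hilbert 90 for vector spaces). A $\sigma$-stable $\mathbb{F}_{q^2}$-subspace $V$ satisfies $V=W\otimes_{\mathbb{F}_q}\mathbb{F}_{q^2}$, where $W=V\cap\mathbb{F}_q^t$, so $\dim_{\mathbb{F}_q}W=\dim_{\mathbb{F}_{q^2}}V$.

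The descent step is the only nontrivial one, so we will prove it directly. Pick $\omega\in\mathbb{F}_{q^2}$ with $\omega^q\ne\omega$. For $Y\in V$, the vectors $Y+\sigma(Y)$ and $\omega Y+\sigma(\omega Y)$ lie in $V\cap\mathbb{F}_q^t$, because they are fixed by $\sigma$ and belong to $V$ by stability. Their combination
\[
\omega^q\bigl(Y+\sigma(Y)\bigr)-\bigl(\omega Y+\sigma(\omega Y)\bigr)=(\omega^q-\omega)Y
\]
then shows that $W$ spans $V$ over $\mathbb{F}_{q^2}$. For the opposite direction, $\mathbb{F}_q$-linearly independent vectors of $\mathbb{F}_q^t$ remain $\mathbb{F}_{q^2}$-linearly independent: independence is a rank condition on a matrix with entries in $\mathbb{F}_q$, and rank does not change under field extension. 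The two facts together give $\dim_{\mathbb{F}_q}W=\dim_{\mathbb{F}_{q^2}}V$, which is the claimed equality of dimensions.
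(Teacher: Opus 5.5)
Your proof is correct, but it reaches the conclusion by a different route from the paper. The paper writes $x_i=\beta^{l_i}$ for a generator $\beta$ of $U_{q+1}$ and identifies a solution vector with the polynomial $f(x)=\sum_i X_i x^{l_i}$. Over $\mathbb{F}_q$, the conditions $f(\beta^r)=f(\beta^s)=0$ then amount to divisibility by $L=\mathrm{lcm}(g_1,g_2)\in\mathbb{F}_q[x]$, where $g_1,g_2$ are the minimal polynomials of $\beta^r,\beta^s$. Since $\beta^{rq}=\beta^{-r}$, the roots of $g_1,g_2$ are exactly $\beta^{\pm r},\beta^{\pm s}$, so the four conditions over $\mathbb{F}_{q^2}$ amount to the same divisibility by $L$. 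Both solution spaces are therefore cut out by a single condition with coefficients in $\mathbb{F}_q$. The paper then simply asserts $d_1=d_2$, leaving implicit that such a condition defines subspaces of equal dimension over $\mathbb{F}_q$ and over $\mathbb{F}_{q^2}$ (rank invariance under field extension).

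You use the same underlying fact, $x^q=x^{-1}$ on $U_{q+1}$, but encode it as ``$\sigma$ permutes the four rows.'' You then prove the descent step $V=\mathrm{span}_{\mathbb{F}_{q^2}}(V\cap\mathbb{F}_q^t)$ explicitly, via the trace-type identity $(\omega^q-\omega)Y=\omega^q(Y+\sigma Y)-(\omega Y+\sigma(\omega Y))$. Your version is more self-contained and more general. It needs no generator $\beta$, no exponents $l_i$ and no minimal polynomials, and it does not use the distinctness of the $x_i$ (which the paper needs so that $\mathrm{span}(x^{l_1},\dots,x^{l_t})$ is $t$-dimensional). It also applies verbatim to any Frobenius-stable system. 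The paper's version, in exchange, gives a concrete description of the solution space as the polynomials in $\mathrm{span}(x^{l_1},\dots,x^{l_t})$ divisible by $L$, in the spirit of the usual BCH and cyclotomic-coset arguments.
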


\begin{proof}
Let $\beta$ be a generator of $U_{q+1}$.
Since $x_1,\cdots,x_t$ are distinct elements of $U_{q+1}$, we write $x_i=\beta^{l_i}$
for $i\in\langle t\rangle$, where $0\le l_1< \cdots< l_t\le q$.
Define $$S_1:=\left\{{\bf X}\in\mathbb{F}_q^t\mid A{\bf X}={\bf 0}\right\},d_1:=\dim_{\mathbb{F}_q}(S_1),$$
$$S_2:=\left\{{\bf Y}\in \mathbb{F}_{q^2}^t\mid B{\bf Y}={\bf 0}\right\},d_2:=\dim_{\mathbb{F}_{q^2}}(S_2),$$
where $A$ is the coefficient matrix of (\ref{eq1}) and $B$ is
the coefficient matrix of (\ref{eq2}).

Consider the linear isomorphism over $\mathbb{F}_q$:
\begin{align*}
\phi:\mathbb{F}_q^t & \to V_1,\\
(a_0,a_1,\cdots,a_{t-1})& \mapsto a_0x^{l_1}+a_1x^{l_2}+\cdots+a_{t-1}x^{l_t},
\end{align*}
where $V_1:=\mathrm{span}(x^{l_1},x^{l_2},\cdots,x^{l_t})\subseteq\mathbb{F}_q[x]$.
Under this map, $S_1$ corresponds to the set of polynomials $f(x)\in V_1$ such that
$f(\beta^r)=f(\beta^s)=0$. Let $g_1(x)$ and $g_2(x)$ be the minimal polynomials of
$\beta^r$ and $\beta^s$ over $\mathbb{F}_q$, respectively. Then $\mathrm{lcm}(g_1(x),g_2(x))|f(x)$.
Hence, $$S_1\cong \left\{ f(x)\in V_1\mid \mathrm{lcm}(g_1(x),g_2(x))|f(x)\right\}.$$

Now we consider the linear isomorphism over $\mathbb{F}_{q^2}$:
\begin{align*}
\psi:\mathbb{F}_{q^2}^t& \to V_2,\\
(b_0,b_1,\cdots,a_{t-1})& \mapsto b_0x^{l_1}+b_1x^{l_2}+\cdots+b_{t-1}x^{l_t},
\end{align*}
where $V_2:=\mathrm{span}(x^{l_1},x^{l_2},\cdots,x^{l_t})\subseteq\mathbb{F}_{q^2}[x]$.
Under this map, $S_2$ corresponds to the set of polynomials $f(x)\in V_2$ such that
$$f(\beta^r)=f(\beta^{-r})=f(\beta^s)=f(\beta^{-s})=0.$$
Since $\beta^{-r}=\beta^{rq}$ and $\beta^{-s}=\beta^{sq}$ are also roots of $g_1(x)$ and $g_2(x)$, respectively,
one has $$S_2\cong\left\{ f(x)\in V_2\mid\mathrm{lcm}(g_1(x),g_2(x))| f(x)\right\}.$$
Therefore, we conclude that $d_1=d_2$.

This finishes the proof of Lemma \ref{lem3.1}.
\end{proof}

\begin{lem}\label{lem3.2}
Let $q$ be a positive power of a prime $p$ and let $x,y,z,w\in U_{q+1}\subseteq \mathbb{F}_{q^{2}}$
be four pairwise distinct elements. Then for all integer $h$, it holds in $\mathbb{F}_{q^{2}}$ that
\begin{equation}\label{eq5}
\det\left(\begin{array}{cccc}
        x^h & y^h & z^h & w^h \\
        x^{h+1} & y^{h+1} & z^{h+1} & w^{h+1} \\
        x^{-(h+1)} & y^{-(h+1)} & z^{-(h+1)} & w^{-(h+1)} \\
        x^{-h} & y^{-h} & z^{-h} & w^{-h}
      \end{array}
\right)=0
\end{equation} if and only if it holds in $\mathbb{F}_{q^{2}}$ that
\begin{equation}\label{eq6}
\det\left( \begin{array}{cc}
\frac{X_1^{2h+1}-1}{X_1-1} & \frac{X_2^{2h+1}-1}{X_2-1} \\
\frac{Y_1^{2h+1}-1}{Y_1-1} & \frac{Y_2^{2h+1}-1}{Y_2-1}
\end{array}
\right)=0,
\end{equation}
where $X_1=\frac{x}{w},X_2=\frac{y}{w},Y_1=\frac{x}{z}$
and $Y_2=\frac{X_2Y_1}{X_1}=\frac{y}{z}$.
\end{lem}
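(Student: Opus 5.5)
The plan is to turn the $4\times 4$ determinant in (\ref{eq5}) into a $2\times 2$ determinant by operations that only change it by a nonzero factor. Every step below is either multiplying a row or column by a nonzero scalar, or adding a multiple of one row or column to another. Such steps preserve vanishing, so the "if and only if" follows automatically. Throughout, $x,y,z,w$ are nonzero and pairwise distinct, so every ratio $X_1,X_2,Y_1,Y_2,z/w$ lies in $U_{q+1}\setminus\{1\}$. Hence all divisions by $u-1$ below are legitimate, and $h$ may be negative.

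\emph{Step 1 (normalize columns).} Multiply the column indexed by $u\in\{x,y,z,w\}$ by $u^{h+1}\neq 0$. That column becomes $(u^{2h+1},u^{2h+2},1,u)^T$.

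\emph{Step 2 (dehomogenize).} Replacing every $u$ by $u/w$ multiplies rows $1,2,4$ by $w^{-(2h+1)}$, $w^{-(2h+2)}$ and $w^{-1}$ respectively, and leaves row $3$ unchanged. So we may assume $w=1$ and work with the variables $X_1=x/w$, $X_2=y/w$ and $Z:=z/w$. The $w$-column is now $(1,1,1,1)^T$.

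\emph{Step 3 (reduce to $3\times 3$).} Subtract the $w$-column from the other three columns and expand along the third row, which now has a single nonzero entry. This leaves the $3\times3$ determinant whose column for $u\in\{X_1,X_2,Z\}$ is $(u^{2h+1}-1,\;u^{2h+2}-1,\;u-1)^T$.

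\emph{Step 4 (use the identity for $b_u$).} Divide each column by $u-1\ne0$ and write $a_u:=\frac{u^{2h+1}-1}{u-1}$. The identity $u^{2h+2}-1=u(u^{2h+1}-1)+(u-1)$ gives $b_u:=\frac{u^{2h+2}-1}{u-1}=ua_u+1$. Subtracting row $3$ (all ones) from row $2$ turns the column for $u$ into $(a_u,\;ua_u,\;1)^T$.

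\emph{Step 5 (reduce to $2\times 2$).} Replace row $2$ by row $2$ minus $Z$ times row $1$, giving entries $(u-Z)a_u$, which vanish at $u=Z$. Then subtract the $Z$-column from the $X_1$- and $X_2$-columns and expand along row $3$. The vanishing of (\ref{eq5}) is now equivalent to
\[
\det\begin{pmatrix} a_{X_1}-a_Z & a_{X_2}-a_Z\\ (X_1-Z)a_{X_1} & (X_2-Z)a_{X_2}\end{pmatrix}=0 .
\]

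\emph{Step 6 (match with (\ref{eq6})).} Compare the last determinant with (\ref{eq6}), whose first row is $(a_{X_1},a_{X_2})$. Since $Y_i=X_i/Z$, its second row has entries
\[
a_{Y_i}=Z^{-2h}\,\frac{X_i^{2h+1}-Z^{2h+1}}{X_i-Z}.
\]
The key identity is
\[
X_i^{2h+1}-Z^{2h+1}=(X_i-1)a_{X_i}-(Z-1)a_Z .
\]
With it, both $2\times2$ determinants can be written in terms of $a_{X_1},a_{X_2},a_Z$ and $X_1,X_2,Z$. I expect them to agree up to a factor that is a product of $Z^{-2h}$, the differences $X_i-Z$, and terms like $Z-1$. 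That factor would be nonzero by distinctness.

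This final bookkeeping is the main obstacle: the row and column operations above may need adjusting so that the determinant matches (\ref{eq6}) exactly rather than up to some awkward combination. If the direct comparison gets messy, the first thing I would try is to skip Step 5. Instead I would substitute $Y_i=X_i/Z$ directly into (\ref{eq6}), clear the denominators $X_i-Z$, and expand the $3\times3$ determinant from Step 4 along its $Z$-column. I would then compare the two expansions term by term, using $a_{X_i}(X_i-1)=X_i^{2h+1}-1$ to eliminate the powers.
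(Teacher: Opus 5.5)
Your reduction is correct and is essentially the paper's own argument. Both normalize by $w$, subtract the all-ones column, divide each column by $u-1$, simplify a row using $u^{2h+2}-1=u(u^{2h+1}-1)+(u-1)$, subtract the $Z$-column, and match with the target $2\times 2$ determinant via $X_i^{2h+1}-Z^{2h+1}=(X_i-1)a_{X_i}-(Z-1)a_Z$.

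The bookkeeping you left as an expectation does work, in one move. Multiply the first row of your $2\times 2$ matrix by $Z-1$ and add the second row. By your key identity the first row becomes $\bigl(Z^{2h}(X_1-Z)a_{Y_1},\,Z^{2h}(X_2-Z)a_{Y_2}\bigr)$. Writing $D_1$ for your determinant, this gives $(1-Z)D_1=Z^{2h}(X_1-Z)(X_2-Z)\bigl(a_{X_1}a_{Y_2}-a_{X_2}a_{Y_1}\bigr)$, and every prefactor is nonzero by distinctness. This is the same matrix the paper's final row operation reaches.
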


\begin{proof}
Let
$$H(x,y,z,w)=\left(\begin{array}{cccc}
        x^h & y^h & z^h & w^h \\
        x^{h+1} & y^{h+1} & z^{h+1} & w^{h+1} \\
        x^{-(h+1)} & y^{-(h+1)} & z^{-(h+1)} & w^{-(h+1)} \\
        x^{-h} & y^{-h} & z^{-h} & w^{-h}
      \end{array}
\right).$$
It follows that
\begin{align*}
\det(H(x,y,z,w))
=&\det\left(\begin{array}{cccc}
        (\frac{x}{w})^h & \frac{y}{w}^h & \frac{z}{w}^h & 1 \\
        (\frac{x}{w})^{h+1} & \frac{y}{w}^{h+1} & \frac{z}{w}^{h+1} & 1 \\
        (\frac{x}{w})^{-(h+1)} & \frac{y}{w}^{-(h+1)} & \frac{z}{w}^{-(h+1)} & 1 \\
        (\frac{x}{w})^{-h} & \frac{y}{w}^{-h} & \frac{z}{w}^{-h} & 1
      \end{array}
\right).\\
\end{align*}
Let $X=\frac{x}{w},Y=\frac{y}{w}$ and $Z=\frac{z}{w}$. Then we have
\begin{align*}
\det(H(x,y,z,w))
=& (XYZ)^{-h}\det\left(\begin{array}{cccc}
X^{2h} & Y^{2h} & Z^{2h} & 1\\
X^{2h+1} & Y^{2h+1} & Z^{2h+1} & 1\\
X^{-1} & Y^{-1} & Z^{-1} & 1 \\
1 & 1 & 1 & 1\\
\end{array}
\right)\\
=&(XYZ)^{-h}\det\left(\begin{array}{cccc}
X^{2h}-1 & Y^{2h}-1 & Z^{2h}-1 & 1\\
X^{2h+1}-1 & Y^{2h+1}-1 & Z^{2h+1}-1 & 1\\
X^{-1}-1 & Y^{-1}-1 & Z^{-1}-1 & 1 \\
0 & 0 & 0 & 1\\
\end{array}
\right)\\
=& (XYZ)^{-h}\det\left(\begin{array}{ccc}
X^{2h}-1 & Y^{2h}-1 & Z^{2h}-1\\
X^{2h+1}-1 & Y^{2h+1}-1 & Z^{2h+1}-1\\
X^{-1}-1 & Y^{-1}-1 & Z^{-1}-1
\end{array}
\right)\\
=&(XYZ)^{-h}(X-1)(Y-1)(Z-1)\det\small\left(\begin{array}{ccc}
\frac{X^{2h}-1}{X-1} & \frac{Y^{2h}-1}{Y-1} & \frac{Z^{2h}-1}{Z-1} \\
\frac{X^{2h+1}-1}{X-1} & \frac{Y^{2h+1}-1}{Y-1} & \frac{Z^{2h+1}-1}{Z-1} \\
X^{-1} & Y^{-1} & Z^{-1} \\
\end{array}
\right)\\
=&(XYZ)^{-(h+1)}(X-1)(Y-1)(Z-1) \\
&\times \det\left(\begin{array}{ccc}
\frac{X}{X-1}(X^{2h}-1) & \frac{Y}{Y-1}(Y^{2h}-1) & \frac{Z}{Z-1}(Z^{2h}-1) \\
\frac{X}{X-1}(X^{2h+1}-1) & \frac{Y}{Y-1}(Y^{2h+1}-1) & \frac{Z}{Z-1}(Z^{2h+1}-1) \\
1 & 1 & 1 \\
\end{array}
\right)\\
=&(XYZ)^{-(h+1)}(X-1)(Y-1)(Z-1) \\
&\times \det\left(\begin{array}{ccc}
\frac{X}{X-1}(X^{2h}-1) & \frac{Y}{Y-1}(Y^{2h}-1) & \frac{Z}{Z-1}(Z^{2h}-1) \\
\frac{X}{X-1}(X^{2h+1}-X^{2h}) & \frac{Y}{Y-1}(Y^{2h+1}-Y^{2h}) & \frac{Z}{Z-1}(Z^{2h+1}-Z^{2h}) \\
1 & 1 & 1 \\
\end{array}
\right)\\
=&(XYZ)^{-(h+1)}(X-1)(Y-1)(Z-1) \\
&\times \det\left(\begin{array}{ccc}
\frac{X}{X-1}(X^{2h}-1) & \frac{Y}{Y-1}(Y^{2h}-1) & \frac{Z}{Z-1}(Z^{2h}-1) \\
X^{2h+1} & Y^{2h+1} & Z^{2h+1} \\
1 & 1 & 1 \\
\end{array}
\right)\\
=&(XYZ)^{-(h+1)}(X-1)(Y-1)(Z-1) \\
&\times \det\left(\begin{array}{ccc}
\frac{X^{2h+1}-X}{X-1}-\frac{Z^{2h+1}-Z}{Z-1} & \frac{Y^{2h+1}-Y}{Y-1}-\frac{Z^{2h+1}-Z}{Z-1} & \frac{Z^{2h+1}-Z}{Z-1} \\
X^{2h+1}-Z^{2h+1} & Y^{2h+1}-Z^{2h+1} & Z^{2h+1} \\
0 & 0 & 1 \\
\end{array}
\right)\\
=&(XYZ)^{-(h+1)}(X-1)(Y-1)(Z-1) \\
&\times \det\left(\begin{array}{cc}
\frac{X^{2h+1}-X}{X-1}-\frac{Z^{2h+1}-Z}{Z-1} & \frac{Y^{2h+1}-Y}{Y-1}-\frac{Z^{2h+1}-Z}{Z-1} \\
X^{2h+1}-Z^{2h+1} & Y^{2h+1}-Z^{2h+1}
\end{array}
\right)\\
=&(XYZ)^{-(h+1)}(X-1)(Y-1)(Z-1) \\
&\times \det\left(\begin{array}{cc}
\frac{X^{2h+1}-1}{X-1}-\frac{Z^{2h+1}-1}{Z-1} & \frac{Y^{2h+1}-1}{Y-1}-\frac{Z^{2h+1}-1}{Z-1} \\
X^{2h+1}-Z^{2h+1} & Y^{2h+1}-Z^{2h+1}
\end{array}
\right)\\
=&(XYZ)^{-(h+1)}(X-1)(Y-1) \\
&\times \det\left(\begin{array}{cc}
\frac{Z-1}{X-1}(X^{2h+1}-1)-(Z^{2h+1}-1) & \frac{Z-1}{Y-1}(Y^{2h+1}-1)-(Z^{2h+1}-1) \\
X^{2h+1}-Z^{2h+1} & Y^{2h+1}-Z^{2h+1}
\end{array}
\right)\\
=&(XYZ)^{-(h+1)}(X-1)(Y-1) \\
&\times \det\left(\begin{array}{cc}
\frac{Z-X}{X-1}(X^{2h+1}-1) & \frac{Z-Y}{Y-1}(Y^{2h+1}-1) \\
X^{2h+1}-Z^{2h+1} & Y^{2h+1}-Z^{2h+1}
\end{array}
\right)\\
=&-(XY)^{-(h+1)}Z^{h-1}(X-1)(Y-1)(X-Z)(Y-Z) \\
&\times \det\left(\begin{array}{cc}
\frac{X^{2h+1}-1}{X-1} & \frac{Y^{2h+1}-1}{Y-1} \\
\frac{(\frac{X}{Z})^{2h+1}-1}{(\frac{X}{Z})-1} & \frac{(\frac{Y}{Z})^{2h+1}-1}{(\frac{Y}{Z})-1}
\end{array}
\right).
\end{align*}
Let $X_1=X=\frac{x}{w},X_2=Y=\frac{y}{w},Y_1=\frac{X}{Z}=\frac{x}{z}$ and $Y_2=\frac{Y}{Z}=\frac{y}{z}$.
Since $X,Y,Z\in U_{q+1}\backslash\{1:=1_{\mathbb{F}_{q^{2}}}\}\subseteq \mathbb{F}_{q^{2}}$ are
three pairwise distinct elements, one knows that
$$-(XY)^{-(h+1)}Z^{h-1}(X-1)(Y-1)(X-Z)(Y-Z)\ne 0.$$
It then follows immediately that (\ref{eq5}) holds in $\mathbb{F}_{q^{2}}$
if and only if (\ref{eq6}) holds in $\mathbb{F}_{q^{2}}$ as expected.

The proof of Lemma \ref{lem3.2} is complete.
\end{proof}

\begin{lem}\label{lem3.3}
Let $q=p^m$ with $p$ a prime and $m$ a positive integer.
Suppose $2h+1 \equiv p^{m'} \pmod{q+1}$, where $m'$
is a positive integer. Let $x,y,z,w \in U_{q+1}$ be four
pairwise distinct elements. Then the following are equivalent:

{\rm (1).} The quadruple $(x,y,z,w)$ satisfies the determinant
condition (\ref{eq5}).

{\rm (2).} We have
$$w\in\{\mathrm{\pi}_{a,b,c,d}(\xi)\mid \xi\in
\mathbb{F}_{p^t}^{\ast}\setminus\{1\}\},$$
where $t=\gcd(m,m')$, $\mathrm{\pi}_{a,b,c,d}:
\mathbb{F}_q\cup\{\infty\}\to U_{q+1},
\mathrm{\pi}_{a,b,c,d}(\xi)=\frac{a\xi + c}{b\xi + d}$,
$a=y(x-z)$, $b=x-z$, $c=x(z-y)$ and $d=z-y$.

{\rm (3).} The quadruple $(x,y,z,w)$ is the image under
the projective linear transformation $\mathrm{\pi}_{a,b,c,d}$
of $(0,\infty,1,\xi)$ for some $\xi \in \mathbb{F}_{p^t}^* \setminus \{1\}$.

\end{lem}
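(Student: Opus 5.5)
The plan is to prove (1)$\Leftrightarrow$(2) by reducing the determinant condition, via Lemma \ref{lem3.2}, to a statement about a cross-ratio. The equivalence (2)$\Leftrightarrow$(3) is then only a reformulation, once we check that $\mathrm{\pi}_{a,b,c,d}$ sends $0,\infty,1$ to $x,y,z$. This is a direct computation: $\mathrm{\pi}_{a,b,c,d}(0)=c/d=x$, $\mathrm{\pi}_{a,b,c,d}(\infty)=a/b=y$, and $\mathrm{\pi}_{a,b,c,d}(1)=\frac{y(x-z)+x(z-y)}{x-y}=z$. Its determinant is $ad-bc=(x-z)(z-y)(y-x)\neq 0$.

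\textbf{Step 1 (simplify the $2\times 2$ determinant).} By Lemma \ref{lem3.2}, condition (\ref{eq5}) is equivalent to (\ref{eq6}). All four quantities $X_1,X_2,Y_1,Y_2$ lie in $U_{q+1}\setminus\{1\}$, so $u^{q+1}=1$ for each of them. Since $2h+1\equiv p^{m'}\pmod{q+1}$, we get $u^{2h+1}=u^{p^{m'}}$. Hence
\[
\frac{u^{2h+1}-1}{u-1}=\frac{(u-1)^{p^{m'}}}{u-1}=(u-1)^{p^{m'}-1}.
\]
Therefore (\ref{eq6}) becomes $R^{p^{m'}-1}=1$, where
\[
R=\frac{(X_1-1)(Y_2-1)}{(X_2-1)(Y_1-1)}=\frac{(x-w)(y-z)}{(y-w)(x-z)}.
\]
This $R$ is a cross-ratio of $(x,y,z,w)$. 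Because the points are distinct, $R\neq 0,1$. So (1) holds if and only if $R\in\mathbb{F}_{p^{m'}}\setminus\{0,1\}$.

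\textbf{Step 2 (the cross-ratio lies in $\mathbb{F}_q$).} By Lemma \ref{U_q+1_PG(1,q)}, $U_{q+1}$ is the image of $\mathbb{F}_q\cup\{\infty\}$ under the M\"obius map $\mu$, which is defined over $\mathbb{F}_{q^2}$. M\"obius maps preserve cross-ratios, and the cross-ratio of four points of $\mathbb{F}_q\cup\{\infty\}$ lies in $\mathbb{F}_q$. Hence $R\in\mathbb{F}_q$ automatically. Combining with Step 1, (1) holds if and only if
\[
R\in\mathbb{F}_q\cap\mathbb{F}_{p^{m'}}\setminus\{0,1\}=\mathbb{F}_{p^t}\setminus\{0,1\}.
\]

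\textbf{Step 3 (identify $R$ with $\xi$).} The map $\mathrm{\pi}_{a,b,c,d}$ is a M\"obius transformation sending $(0,\infty,1)$ to $(x,y,z)$. By the three-point case of Step 2's argument, it maps $\mathbb{F}_q\cup\{\infty\}$ bijectively onto $U_{q+1}$. So $w=\mathrm{\pi}_{a,b,c,d}(\xi)$ for a unique $\xi\in\mathbb{F}_q\cup\{\infty\}$, and $\xi\ne 0,1,\infty$ since $w\ne x,z,y$. Cross-ratio invariance gives that $R$ equals the cross-ratio of $(0,\infty,1,\xi)$ in the same ordering, which is $\xi$ itself. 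Even if the normalization yields instead a fixed expression such as $1/\xi$, $1-\xi$ or $\xi/(\xi-1)$, the set $\mathbb{F}_{p^t}\setminus\{0,1\}$ is stable under all of these. Hence $R\in\mathbb{F}_{p^t}\setminus\{0,1\}$ if and only if $\xi\in\mathbb{F}_{p^t}^{\ast}\setminus\{1\}$, which proves (1)$\Leftrightarrow$(2).

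I expect the main obstacle to be Steps 2 and 3. The points requiring care are:
\begin{itemize}
\item that a M\"obius map over $\mathbb{F}_{q^2}$ sending three points of $U_{q+1}$ to three points of $\mathbb{F}_q\cup\{\infty\}$ carries $U_{q+1}$ onto all of $\mathbb{F}_q\cup\{\infty\}$;
\item the exact bookkeeping of which cross-ratio ordering $R$ represents.
\end{itemize}
For the first point, the approach is to compose with $\mu^{-1}$ from Lemma \ref{U_q+1_PG(1,q)}. This reduces it to the fact that a M\"obius map fixing $\mathbb{F}_q$-rational points $0,\infty,1$ up to $\mathrm{PGL}(2,q)$ (Lemma \ref{PGL2}) lies in $\mathrm{PGL}(2,q)$.
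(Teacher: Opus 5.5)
Your proof is correct. Step 1 coincides with the paper's reduction of (\ref{eq5}) to $R^{p^{m'}-1}=1$, but Steps 2--3 take a different route.

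The paper stays computational throughout:
\begin{itemize}
\item It shows $R\in\mathbb{F}_q$ by checking $R^q=R$ directly, using $u^q=u^{-1}$ on $U_{q+1}$.
\item It identifies $R$ with $\xi$ by solving $\xi=\frac{(x-w)(y-z)}{(y-w)(x-z)}$ linearly for $w$. This gives $w=\frac{x(z-y)+\xi y(x-z)}{(z-y)+\xi(x-z)}=\mathrm{\pi}_{a,b,c,d}(\xi)$, once a Frobenius argument shows $z-y+\xi(x-z)\neq 0$.
\item The converse (3)$\Rightarrow$(1) is the same identity read backwards.
\end{itemize}
This explicit inversion shows $R=\xi$ exactly, so your hedge about $1/\xi$, $1-\xi$, $\xi/(\xi-1)$ is unnecessary. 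It also makes (1)$\Leftrightarrow$(2) independent of whether $\mathrm{\pi}_{a,b,c,d}$ maps $\mathbb{F}_q\cup\{\infty\}$ onto $U_{q+1}$, because $w\in U_{q+1}$ is a hypothesis.

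Your cross-ratio argument is more conceptual and explains why the condition is projectively invariant, but it rests on that surjectivity to produce the preimage $\xi$. In exchange, you actually prove it: $\mu^{-1}\circ\mathrm{\pi}_{a,b,c,d}$ lies in $\mathrm{PGL}(2,q)$ by sharp $3$-transitivity together with uniqueness of a M\"obius map with three prescribed images. The paper only appeals to Lemma~\ref{PGL2}, which concerns maps with coefficients in $\mathbb{F}_q$, whereas here $a,b,c,d\in\mathbb{F}_{q^2}$. Since that mapping property is exactly what the block description in Theorem~\ref{thm1.2} needs, your argument shores up a point the paper leaves soft.

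Finally, your Step 2 becomes redundant once Step 3 is in place, because $R=\xi\in\mathbb{F}_q$ then follows automatically.
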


\begin{proof}
Let $x,y,z,w$ be four distinct elements in $U_{q+1}$. By Lemma \ref{lem3.2},
the determinant condition $\det(H(x,y,z,w)) = 0$ is equivalent to
\begin{align*}
\det\left( \begin{array}{cc}
\frac{X_1^{2h+1}-1}{X_1-1} & \frac{X_2^{2h+1}-1}{X_2-1} \\
\frac{Y_1^{2h+1}-1}{Y_1-1} & \frac{Y_2^{2h+1}-1}{Y_2-1}
\end{array}
\right)=0,
\end{align*}
where $X_1=\frac{x}{w},X_2=\frac{y}{w},Y_1=\frac{x}{z},
Y_2=\frac{X_2Y_1}{X_1}=\frac{y}{z}$.
Since $2h+1\equiv p^{m'}\pmod{q+1}$, one has $$\frac{\beta^{2h+1}-1}{\beta-1}=\frac{(\beta-1)^{p^{m'}}}
{\beta-1}=(\beta-1)^{p^{m'}-1}$$
for any element $\beta$ in $U_{q+1}$.
Substituting this into the determinant condition yields $\det(H(x,y,z,w))=0$
if and only if
$$\Big(\frac{(X_1-1)(Y_2-1)}{(X_2-1)(Y_1-1)}\Big)^{p^{m'}-1}=
\Big(\frac{(x-w)(y-z)}{(y-w)(x-z)}\Big)^{p^{m'}-1}=1$$
since $X_1-1,X_2-1,Y_1-1$ and $Y_2-1$ are all nonzero.

First of all, we prove that $(1)\Rightarrow(2)$.
Let $\xi:=\frac{(X_1-1)(Y_2-1)}{(X_2-1)(Y_1-1)}$.
Since $X_1,X_2,Y_1,Y_2\in U_{q+1}\setminus\{1\}
\subseteq\mathbb{F}_{q^2}$ and $Y_2=\frac{X_2Y_1}{X_1}$,
one has
\begin{align*}
\xi^{q}=& \frac{(X_1-1)^{q}(Y_2-1)^{q}}{(X_2-1)^{q}(Y_1-1)^{q}}
= \frac{(X_1^{-1}-1)(Y_2^{-1}-1)}{(X_2^{-1}-1)(Y_1^{-1}-1)}\\
=& \frac{(X_1Y_2)^{-1}(1-X_1)(1-Y_2)}{(X_2Y_1)^{-1}(1-X_2)(1-Y_1)}
= \frac{(X_1-1)(Y_2-1)}{(X_2-1)(Y_1-1)}=\xi.
\end{align*}
So $\xi\in \mathbb{F}_{q}$.
Let $t=\gcd(m,m')$. Since $\xi^{p^{m'}-1}=1$ and $\xi^{q-1}=1$,
we derive that
$$\xi^{\gcd(q-1,p^{m'}-1)}=\xi^{p^t-1}=1,$$
which implies that $\xi\in \mathbb{F}_{q^t}^{\ast}$.

For fixed distinct elements $x,y,z \in U_{q+1}$, we determine all
$w \in \mathbb{F}_{q^2}$ satisfying the determinant condition (\ref{eq5}).
We claim that $z-y+\xi(x-z)\ne 0$ for any element $\xi\in \mathbb{F}_{q^t}^{\ast}$.
Suppose that there is an element $\xi$ such that $z-y+\xi(x-z)=0$.
Then
$$0=(z-y+\xi(x-z))^q=z^{-1}-y^{-1}+x^{-1}-z^{-1}=x^{-1}-y^{-1},$$
which contradicts with the hypothesis $x\ne y$. So the claim is proved.
It then follows from the claim and
$$\xi=\frac{(X_1-1)(Y_2-1)}{(X_2-1)(Y_1-1)}=\frac{(x-w)(y-z)}{(y-w)(x-z)}$$
that
$$w=\frac{x(z-y)+\xi y(x-z)}{(z-y)+\xi(x-z)}.$$

Via the bijection in Lemma \ref{U_q+1_PG(1,q)}, we can identify
$U_{q+1}$ with the projective line $\mathrm{PG}(1,q)$.
Under this identification, the action of
$\mathrm{PGL}(2,q)$ on $U_{q+1}$ is naturally identified with its action on $\mathrm{PG}(1,q)$.
To express this in terms of a projective linear transformation, we define the map
\begin{align*}
\mathrm{\pi}_{a,b,c,d}:\mathbb{F}_q\cup\{\infty\}&\to\mathbb{F}_{q^2},\\
\xi &\mapsto\frac{a\xi + c}{b\xi + d},
\end{align*}
where $a=y(x-z),b=x-z,c=x(z-y)$ and $d=z-y$.
By Lemma \ref{PGL2}, we know that $\mathrm{\pi}_{a,b,c,d}$ is a
projective permutation on $U_{q+1}$, which yields that $\mathrm{\pi}_{a,b,c,d}(\xi)\in U_{q+1}$
for any $\xi\in\mathbb{F}_q\cup\{\infty\}$. A direct computation shows that
\begin{align*}
\mathrm{\pi}_{a,b,c,d}(0) &= \frac{c}{d} = \frac{x(z-y)}{z-y} = x, \\
\mathrm{\pi}_{a,b,c,d}(\infty) &= \frac{a}{b} = \frac{y(x-z)}{x-z} = y, \\
\mathrm{\pi}_{a,b,c,d}(1) &= \frac{a+c}{b+d} = \frac{y(x-z) + x(z-y)}{(x-z) + (z-y)} = \frac{xy - yz + xz - xy}{x - y} = z.
\end{align*}
Thus $\mathrm{\pi}_{a,b,c,d}(\xi)\in U_{q+1}\setminus\{x,y,z\}$ if and only if $\xi\in\mathbb{F}_q^{\ast}\setminus\{1\}$.

Therefore, if (1) holds, then there exists an element $\xi\in\mathbb{F}_q^{\ast}\setminus\{1\}$
such that $w=\mathrm{\pi}_{a,b,c,d}(\xi)$, that is
$w\in\{\mathrm{\pi}_{a,b,c,d}(\xi))\mid \xi\in\mathbb{F}_{p^t}^{\ast}\setminus\{1\}\}$.
Hence (2) follows.

Now, we show that $(2)\Rightarrow (3)$. It follows directly from
$\mathrm{\pi}_{a,b,c,d}(0) = x$, $\mathrm{\pi}_{a,b,c,d}(\infty) = y$, $\mathrm{\pi}_{a,b,c,d}(1) = z$
and $w\in\{\mathrm{\pi}_{a,b,c,d}(\xi))\mid \xi\in\mathbb{F}_{p^t}^{\ast}\setminus\{1\}\}$.

Finally, we prove that $(3)\Rightarrow (1)$. If $(x,y,z,w)=\mathrm{\pi}_{a,b,c,d}(0,\infty,1,\xi)$
  for some $\xi\in\mathbb{F}_{p^t}^{\ast}\setminus\{1\}$, then
  \begin{align*}
  \left(\frac{(x-w)(y-z)}{(y-w)(x-z)}\right)^{p^{m'}-1}
  =\xi^{p^{m'}-1}=1.
  \end{align*}
  Namely, $\det(H(x,y,z,w))=0$ as desired.

This completes the proof of Lemma \ref{lem3.3}.
\end{proof}

\section{Proof of Theorem \ref{thm1.1}}
In this section, we give the proof of Theorem \ref{thm1.1}.\\
\\
{\it Proof of Theorem \ref{thm1.1}.}
First, since $\mathcal{C}$ is an AMDS code with parameters $[n,k,n-k]$,
it follows that for any subset $I\subseteq\langle n\rangle$ with $|I|=n-k-1$,
we have $\dim(\mathrm{span}(\{H_i,i\in I\}))=n-k-1$,
where $H_i$ denotes the $i$-th column of
the parity-check matrix $H$ of $\mathcal{C}$.

On the other hand, since the minimum weight codewords of
$\mathcal{C}$ support an $(n-k-1)-(n,n-k,\lambda)$ design, denoted by $\mathbb{D}=(\mathbb{P},\mathbb{B})$,
where $\mathbb{P}=\langle n\rangle$ and $$\mathbb{B}=\{ I\cup\{j\} \mid
I\subseteq \langle n\rangle,|I|=n-k-1,
j\in\mathrm{span}(\{H_i,i\in I\})\setminus I \},$$
we conclude that $$\left|\{j\in\langle n\rangle\mid H_j\in\mathrm{span}(\{H_i,i\in I\})\}\right|=n-k-1+\lambda.$$
Applying Lemma \ref{lem_structure} yields that
$\mathcal{C}^{\bot}(\langle n\rangle\setminus I)=\mathrm{span}(\{\mathbf{c}^{\bot}\})$,
where $$\mathrm{supp}(\mathbf{c}^{\bot})=\langle n\rangle\setminus
\{j\in\langle n\rangle\mid H_j\in\mathrm{span}(\{H_i,i\in I\})\}.$$
Therefore, $$\left|\mathrm{supp}(\mathbf{c}^{\bot})\right|=n-(n-k-1+\lambda)=k+1-\lambda,$$
which implies that the minimum distance $d^{\bot}$ of $\mathcal{C}^{\bot}$ is $k+1-\lambda$.
By the arbitrariness of the subset $I$, there does not exist a codeword
$\mathbf{c}^{\bot}\in \mathcal{C}^{\bot}$ such that
$k+1-\lambda<w(\mathbf{c}^{\bot})\le k+1$.
Moreover, the support sets of all the minimum weight codewords in
$\mathcal{C}^{\bot}$ are $$\{\langle n\rangle\setminus
\{j\in\langle n\rangle\mid H_j\in\mathrm{span}(\{H_i,i\in I\})\}\mid
I\subseteq\langle n\rangle,\left|I\right|=n-k-1\}.$$ It then follows that
the block set of ${\mathbb{D}^{\bot}}'$ is
$${\mathbb{B}^{\bot}}'=\{\cup_{B\in\mathbb{B}\atop I\subseteq B}B\mid
I\subseteq\mathbb{P},\left|I\right|=n-k-1\}.$$
Consequently, ${\mathbb{D}^{\bot}}'$ is an $(n-k-1)$-$(n,n-k-1+\lambda,1)$
design uniquely determined by $\mathbb{D}$. Thus,
by Lemma \ref{cite_complementary_design}, we conclude that the minimum
weight codewords in $\mathcal{C}^{\bot}$ support an $(n-k-1)$-$(n,k+1-\lambda,\lambda^{\bot})$ design $\mathbb{D}^{\bot}$,
where $\lambda^{\bot}=\frac{\binom{k+1}{n-k-1+\lambda}}{\binom{k+1}{\lambda}}$.

Furthermore, for any subset $I\subseteq\langle n\rangle$ with $\left|I\right|= n-k-1$,
we know that $\mathcal{C}^{\bot}(\langle n\rangle\setminus I)=\mathrm{span}(\{\mathbf{c}^{\bot}\})$
for some nonzero codeword $\mathbf{c}^{\bot}\in \mathcal{C}^{\bot}$
satisfying $\left|\mathrm{supp}(\mathbf{c}^{\bot})\right|=k+1-\lambda$.
For any subset $S\subseteq\langle n\rangle$, we define
$$g(S):=\left|\{\mathbf{c}^{\bot}\in\mathcal{C}^{\bot}\mid
\mathrm{supp}(\mathbf{c}^{\bot})\subseteq S\}\right|
=\left|\mathcal{C}^{\bot}(S)\right|$$ and
$$f(S):=\left|\{\mathbf{c}^{\bot}\in\mathcal{C}^{\bot}\mid
\mathrm{supp}(\mathbf{c}^{\bot})= S\}\right|.$$
Then one has
$$g(S)=\left|\mathcal{C}^{\bot}(S)\right|
=\left|\mathcal{C}^{\bot}(\langle n\rangle
\setminus(\langle n\rangle\setminus S))\right|
=q^{n-k-\dim(\mathrm{span}(\{ H_j , j \notin S \}))}.$$
The values of $g(S)$ can be determined by considering the two cases $|S| \leq k$ and $|S| \geq k+1$.

Case 1. $|S| \ge k+1$. We have $g(S)=q^{|S|-k}$, which follows from the fact that
any $n-|S|$ columns of the parity-check matrix $H$ are linearly independent.

Case 2. $|S| \le k$. Pick a subset $I\subseteq\langle n\rangle$ such that $S\subseteq I$ and $|I|=k+1$.
By Lemma \ref{lem_structure}, $\mathcal{C}^{\bot}(I)=\mathcal{C}^{\bot}
(\langle n\rangle\setminus(\langle n\rangle\setminus I)) =\mathrm{span}(\{\mathbf{c}^{\bot}\})$,
where $\mathbf{c}^{\bot}$ is a nonzero codeword in $\mathcal{C}^{\bot}$ satisfying that
$\left|\mathrm{supp}(\mathbf{c}^{\bot})\right|=k+1-\lambda$.
We denote $I_{\mathbf{c}^{\bot}}:=\mathrm{supp}(\mathbf{c}^{\bot})$.
Since $\mathcal{C}^{\bot}(S)\subseteq\mathcal{C}^{\bot}(I)
=\mathrm{span}(\{\mathbf{c}^{\bot}\})$,
we have $g(S)=q$ if and only if $I_{\mathbf{c}^{\bot}}\subseteq S$,
$g(S)=1$ if and only if $I_{\mathbf{c}^{\bot}}\not\subseteq S$.
This implies that there are no codewords in $\mathcal{C}^{\bot}$
whose weights in $\langle k+1\rangle \setminus \{k+1-\lambda\}$.

For any subset $S\subseteq\langle n\rangle$ with
$\left|S\right|=k+r$ and $r\in\langle n-k\rangle$,
by Lemma \ref{Mobius inversion formula}, we have
\begin{align*}
f(S)=&\sum_{T\subseteq S}(-1)^{|S|-|T|}g(T)\\
=&\sum_{i=0}^{k+r}\sum_{T\subseteq S\atop |T|=i}(-1)^{k+r-i}g(T)\\
=&\sum_{i=0}^{k-\lambda}\sum_{T\subseteq S\atop |T|=i}(-1)^{k+r-i}
+\sum_{i=k+1}^{k+r}\sum_{T\subseteq S\atop |T|=i}(-1)^{k+r-i}q^{i-k}\\
&+\sum_{i=k+1-\lambda}^{k}\sum_{T\subseteq S\atop |T|=i}(-1)^{k+r-i}g(T)
\sum_{T\subseteq I\subseteq S\atop |I|=k+1}\frac{1}{\binom{k+r-i}{k+1-i}}\\
=&\sum_{i=0}^{k-\lambda}(-1)^{k+r-i}\binom{k+r}{i}
+\sum_{i=k+1}^{k+r}\binom{k+r}{k+r-i}(-1)^{k+r-i}q^{i-k}\\
&+\sum_{i=k+1-\lambda}^{k}\frac{(-1)^{k+r-i}}{\binom{k+r-i}{k+1-i}}
\sum_{I\subseteq S\atop |I|=k+1}\sum_{T\subseteq I\atop |T|=i}g(T)\\
=&\sum_{i=0}^{k-\lambda}(-1)^{k+r-i}\binom{k+r}{i}
+\sum_{j=0}^{r-1}\binom{k+r}{j}(-1)^{j}q^{r-j}\\
&+\sum_{i=k+1-\lambda}^{k}\frac{(-1)^{k+r-i}}{\binom{k+r-i}{k+1-i}}
\sum_{I\subseteq S\atop |I|=k+1}\left(\sum_{T\subseteq I, |T|=i\atop I_{\mathbf{c}^{\bot}}\subseteq T}q
+\sum_{T\subseteq I, |T|=i\atop I_{\mathbf{c}^{\bot}}\not\subseteq T}1\right)\\
=&\sum_{i=0}^{k-\lambda}(-1)^{k+r-i}\binom{k+r}{i}
+\sum_{j=0}^{r-1}\binom{k+r}{j}(-1)^{j}q^{r-j}
+\sum_{i=k+1-\lambda}^{k}\frac{(-1)^{k+r-i}}{\binom{k+r-i}{k+1-i}}\\
&\sum_{I\subseteq S\atop |I|=k+1}\left(q\binom{k+1-(k+1-\lambda)}{i-(k+1-\lambda)}
+\binom{k+1}{i}-\binom{k+1-(k+1-\lambda)}{i-(k+1-\lambda)}\right)\\
=&\sum_{i=0}^{k-\lambda}(-1)^{k+r-i}\binom{k+r}{i}
+\sum_{j=0}^{r-1}\binom{k+r}{j}(-1)^{j}q^{r-j}\\
&+\sum_{i=k+1-\lambda}^{k}\frac{(-1)^{k+r-i}}{\binom{k+r-i}{k+1-i}}
\binom{k+r}{k+1}\left((q-1)\binom{\lambda}{k+1-i}+\binom{k+1}{i}\right)\\
=&\sum_{i=0}^{k-\lambda}(-1)^{k+r-i}\binom{k+r}{i}
+\sum_{j=0}^{r-1}\binom{k+r}{j}(-1)^{j}q^{r-j}\\
&+(q-1)\binom{k+r}{k+1}\sum_{i=k+1-\lambda}^{k}\frac{(-1)^{k+r-i}
\binom{\lambda}{k+1-i}}{\binom{k+r-i}{k+1-i}}
+\sum_{i=k+1-\lambda}^{k}\frac{(-1)^{k+r-i}\binom{k+r}{i}
\binom{k+r-i}{k+1-i}}{\binom{k+r-i}{k+1-i}}\\
=&\sum_{i=0}^{k-\lambda}(-1)^{k+r-i}\binom{k+r}{i}
+\sum_{j=0}^{r-1}\binom{k+r}{j}(-1)^{j}q^{r-j}\\
&+(q-1)\binom{k+r}{k+1}\sum_{i=k+1-\lambda}^{k}
\frac{(-1)^{k+r-i}\binom{\lambda}{k+1-i}}{\binom{k+r-i}{k+1-i}}
+\sum_{i=k+1-\lambda}^{k}(-1)^{k+r-i}\binom{k+r}{i}\\
=&\sum_{i=0}^{k}(-1)^{k+r-i}\binom{k+r}{i}
+\sum_{j=0}^{r-1}\binom{k+r}{j}(-1)^{j}q^{r-j}\\
&+(q-1)\binom{k+r}{k+1}\sum_{i=k+1-\lambda}^{k}\frac{(-1)^{k+r-i}
\binom{\lambda}{k+1-i}}{\binom{k+r-i}{k+1-i}}\\
=&(-1)^r\binom{k+r-1}{k}+\sum_{j=0}^{r-1}\binom{k+r}{j}(-1)^{j}q^{r-j}\\
&+(q-1)\binom{k+r}{k+1}\sum_{i=k+1-\lambda}^{k}\frac{(-1)^{k+r-i}
\binom{\lambda}{k+1-i}}{\binom{k+r-i}{k+1-i}}.
\end{align*}

Now we calculate the number of codewords in $\mathcal{C}^{\bot}(S)$
of Hamming weight $k+1-\lambda$.
Define
$$\mathbb{B}_S:=\{S\setminus \mathrm{supp}(\mathbf{c}^{\bot})
\mid \mathbf{c}^{\bot}\in\mathcal{C}^{\bot}(S),w(\mathbf{c}^{\bot})=k+1-\lambda\}.$$
Since ${\mathbb{D}^{\bot}}'$ is an $(n-k-1)$-$(n,n-k-1+\lambda,1)$ design,
it follows that
$\mathbb{D}_S=(S,\mathbb{B}_S)$ is an $(|S|-k-1)$-$(|S|,|S|-k-1+\lambda,1)$
design obtained by deleting the points outside $S$.
Lemma \ref{cite_complementary_design} gives us that the number
of codewords in $\mathcal{C}^{\bot}(S)$ of Hamming weight
$k+1-\lambda$ is exactly $$(q-1)\frac{\binom{k+1}{|S|-k-1+\lambda}}{\binom{k+1}{\lambda}}.$$
Therefore, the weight enumerator of $\mathcal{C}^{\bot}(S)$
with $|S|=k+r$ and $r\in\langle n-k\rangle$ is given by
\begin{align*}
&A_{\mathcal{C}^{\bot}(S)}(z)=\sum_{T\subseteq S}f(S)z^{|S|}
=\sum_{i=0}^{k+r}\sum_{T\subseteq S\atop |T|=i}f(T)z^{i}
=\sum_{i=0}^{k+1}\sum_{T\subseteq S\atop |T|=i}f(T)z^{i}
+\sum_{i=k+2}^{k+r}\sum_{T\subseteq S\atop |T|=i}f(T)z^{i}\\
=&1+(q-1)\prod_{j=1}^{r-1}\frac{k+1+j}{\lambda+j}z^{k+1-\lambda}
+\sum_{i=k+2}^{k+r}z^{i}\sum_{T\subseteq S\atop |T|=i}(-1)^{i-k}\binom{i-1}{k}\\
&+\sum_{j=0}^{i-k-1}\binom{i}{j}(-1)^{j}q^{i-k-j}
+(q-1)\binom{i}{k+1}\sum_{j=k+1-\lambda}^{k}\frac{(-1)^{i-j}
\binom{\lambda}{k+1-j}}{\binom{i-j}{k+1-j}}\\
=&1+(q-1)\prod_{j=1}^{r-1}\frac{k+1+j}{\lambda+j}z^{k+1-\lambda}
+\sum_{i=k+2}^{k+r}z^{i}\binom{k+r}{i}\Big((-1)^{i-k}\binom{i-1}{k}\\
&+\sum_{j=0}^{i-k-1}\binom{i}{j}(-1)^{j}q^{i-k-j}
+(q-1)\binom{i}{k+1}\sum_{j=k+1-\lambda}^{k}\frac{(-1)^{i-j}
\binom{\lambda}{k+1-j}}{\binom{i-j}{k+1-j}}\Big).
\end{align*}

This finishes the proof of Theorem \ref{thm1.1}. \hfill$\Box$

\section{Proofs of Theorems \ref{thm1.2} and \ref{thm1.3}}

In this section, we show Theorems \ref{thm1.2} and \ref{thm1.3}.
Firstly, we give the proof of Theorem \ref{thm1.2}.\\
\\
{\it Proof of Theorem \ref{thm1.2}.}
Applying Lemma \ref{XWLC-IT25} to the case
$2h+1\equiv p^{m'}\pmod{q+1}$, one derives that the minimum distance
$d$ of the BCH code $\mathcal{C}_{(q, q+1, 3, h)}$ satisfies $d\ge 4$.
Let $\mathcal{C}$ denote the BCH code $\mathcal{C}_{(q, q+1, 3, h)}$
and pick an element $\beta\in U_{q+1}$ being a generator.
Let $g_1(x)$ and $g_2(x)$ be the minimal polynomials
of the elements $\beta^h$ and $\beta^{h+1}$, respectively.
Through the finite field conjugate analysis, we observe
that $g_1(x)$ completely splits into factors containing
the roots $\beta^{\pm h}$ and $g_2(x)$ similarly factors
with roots $\beta^{\pm(h+1)}$. Noting that
$p^m\equiv-1\pmod{q+1}$ and $p^{2m}\equiv 1\pmod{q+1}$,
we derive that the order of $p$ modulo $q+1$ is $2m$.
Therefore, $p^{m'}\equiv 1\pmod{q+1}$ if and only if
$m'\equiv 0\pmod{2m}$, $p^{m'}\equiv -1\pmod{q+1}$
if and only if $m'\equiv m\pmod{2m}$.

(i).
If $m'\equiv 0\pmod{m}$, then exactly one of $2h$
or $2(h+1)$ is congruent to $0$ modulo $q+1$.
Thus $\deg(g_1(x)g_2(x))=\deg(g_1(x))+\deg(g_2(x))=3$.
This implies that the dimension of $\mathcal{C}$ is $q+1-3=q-2$.
Consequently, $\mathcal{C}$ is an MDS code with parameters $[q+1,q-2,4]$.
So (i) is proved.

(ii).
If $m'\not\equiv 0\pmod{m}$, then both of $2h$ and
$2(h+1)$ are not congruent to $0$ modulo $q+1$.
Hence $\deg(g_1(x)g_2(x))=\deg(g_1(x))+\deg(g_2(x))=4$.
This implies that the dimension of $\mathcal{C}$ is $q+1-4=q-3$.
Since $\mathcal{C}$ is actually the solution space of
\begin{align*}
\left(\begin{array}{cccc}
\beta^h & (\beta^2)^h & \cdots & (\beta^{q+1})^h \\
\beta^{h+1} & (\beta^2)^{h+1} & \cdots & (\beta^{q+1})^{h+1} \\
\end{array}
\right){\bf X}=0
\end{align*}
over ${\mathbb{F}_q}$, by Lemma \ref{lem3.1}, \ref{lem3.2}
and the theory of linear algebra over finite fields, every
codeword with Hamming weight $4$ in $\mathcal{C}$, up to
multiple, must correspond to a pairwise distinct quadruple
$(x, y, z, w)$ with $x,y,z,w\in U_{q+1}$ satisfies the
determinant condition (\ref{eq5}).
Since $p^t=2$, we have $\mathbb{F}_{q^t}^{\ast}\setminus\{1\}=\emptyset$.
Then by Lemma \ref{lem3.3}, one can deduce that there is no quadruple
$(x,y,z,w)$ satisfying (\ref{eq5}), which yields that $d>4$.
Meanwhile, Lemma \ref{singleton} tells us that $d\le 5$.
Consequently, $d=5$ and $\mathcal{C}$ is an MDS code with parameters $[q+1,q-3,5]$.
Thus (ii) is proved.

(iii).
Similar to the proof of (ii), one obtains the dimension of $\mathcal{C}$ is $q-3$.
Since $p^t>2$, one has $\mathbb{F}_{q^t}^{\ast}\setminus\{1\}\ne\emptyset$.
It then follows from Lemma \ref{lem3.1}, \ref{lem3.2} and \ref{lem3.3} that
$\mathcal{C}$ is an AMDS code with parameters $[q+1,q-3,4]$
and the minimum weight codeword in
$\mathcal{C}$ support a $3$-$(q+1,4,p^t-2)$ design $\mathbb{D}=(\mathcal{P},\mathcal{B})$,
where $\mathcal{P}=U_{q+1}$ and $$\mathcal{B}=\{\mathrm{\pi}_{a,b,c,d}(0,\infty,1,\xi)
\mid \xi\in\mathbb{F}_{p^t}^{\ast}\setminus\{1\},\{x,y,z\}\subseteq U_{q+1} {\rm\  pairwise\ distinct}\},$$
$a = y(x-z)$, $b = x-z$, $c = x(z-y)$, $d = z-y$, $\mathrm{\pi}_{a,b,c,d}(\xi)=\frac{a\xi + c}{b\xi + d}$
is a projective bijection between $\mathbb{F}_q\cup\{\infty\}$ and $U_{q+1}$.
Hence (iii) is proved.

This completes the proof of Theorem \ref{thm1.2}.\hfill$\Box$\\

Finally, we give the proof of Theorem \ref{thm1.3}.\\
\\
{\it Proof of Theorem \ref{thm1.3}.}
First of all, parts (i) and (ii) follow immediately from parts (i) and (ii)
of Theorem \ref{thm1.2} and Lemma \ref{lem_MDS}.

In the following let us show part (iii). Let $\mathcal{C}^{\bot}$ denote the dual code of BCH code
$\mathcal{C}_{(q, q+1, 3, h)}$ and let $\beta$ be a generator of $U_{q+1}$.
By Theorem \ref{thm1.2}, $\mathcal{C}$ is an AMDS code with parameters $[q+1,q-3,4]$
and the minimum weight codeword in $\mathcal{C}$ support a $3$-$(q+1,4,p^t-2)$ design.
Applying Theorem \ref{thm1.1}, we can conclude that $\mathcal{C}^{\bot}$
is a linear code with parameter $[q+1, 4, q-p^t]$
and the minimum weight codeword in $\mathcal{C}^{\bot}$
support a $3$-$(q+1,q-p^{t},\lambda^{\bot})$ design $\mathbb{D}^{\bot}=(\mathcal{P},\mathcal{B}^{\bot})$,
where $\lambda^{\bot}=\frac{\binom{q-2}{p^t+1}}{\binom{q-2}{p^t-2}}$ and
$\mathcal{B}^{\bot}=\{\langle n\rangle\setminus\cup_{B\in \mathbb{B}\atop I\subseteq B}B\mid I=\mathbb{P},|I|=3\}$.
Thus the block set ${\mathbb{B}^{\bot}}'$ of its complementary design
${\mathbb{D}^{\bot}}'=(\mathbb{P},{\mathbb{B}^{\bot}}')$ is equal to the set
$$\Big\{\bigcup_{B\in \mathbb{B}\atop I\subseteq B}B\Big| I\subseteq\mathbb{P},|I|=3\Big\}.$$
It follows from Theorem \ref{thm1.2} that
$${\mathbb{B}^{\bot}}'=\{\cup_{\xi\in \mathbb{F}_{p^t}\cup\{\infty\}}\mathrm{\pi}_{a,b,c,d}(\xi)
\mid \{x,y,z\}\subseteq U_{q+1}, x\ne y\ne z\ne x\},$$
where $a=y(x-z),b=x-z,c=x(z-y),d=z-y$,
$\mathrm{\pi}_{a,b,c,d}(\xi) = \frac{a\xi + c}{b\xi + d}$
is a projective bijection between $\mathbb{F}_q\cup\{\infty\}$
and $U_{q+1}$. Hence, the design $\mathbb{D}'$ is actually a Witt
spherical geometry design with parameters $3$-$(q+1,p^{\gcd(m,i)}+1,1)$.
Moreover, by Theorem \ref{thm1.1} and \ref{thm1.2}, the weight enumerator
of $\mathcal{C}_{(q, q+1, 3, h)}^{\bot}$ is given by
\begin{align*}
A_{\mathcal{C}_{(q, q+1, 3, h)}^{\bot}}(z)
=&1+(q-1)\frac{(q-1)q(q+1)}{(p^t-1)p^t(p^t+1)}z^{q-p^t}
+\sum_{i=q-1}^{q+1}z^{i}\binom{q+1}{i}\\
&\Big((-1)^{i-q+3}\binom{i-1}{q-3}+\sum_{j=0}^{i-q+2}
\binom{i}{j}(-1)^{j}q^{i-q+3-j}\\
+&(q-1)\binom{i}{q-2}\sum_{j=q-p^t}^{q-3}\frac{(-1)^{i-j}
\binom{p^t-2}{q-2-j}}{\binom{i-j}{q-2-j}}\Big).
\end{align*}

This concludes the proof of Theorem \ref{thm1.3}. \hfill$\Box$

\section{Final Remarks}
In this section, we give several applications of Lemma \ref{lem_structure} and Theorem \ref{thm1.1}.

(1).
It is worth noting that Lemma \ref{lem_structure} is a generalization of Proposition 14 in \cite{[FW-DCC97]}.

Let $\mathcal{C}$ is an NMDS code over $\mathbb{F}_q$ with parameters $[n,k,n-k]$.
For any minimum weight codeword $\mathbf{c}\in\mathcal{C}$ and subset
$I\subseteq \mathrm{supp}(\mathbf{c})$ with size $n-k-1$,
we apply Lemma \ref{lem_structure} with $r=1$. Then we know that
$|I_0|=|\mathrm{supp}(\mathbf{c})|=n-k$ and $|I_1|=k$,
which follows that there exists, up to a multiple,
a unique minimum weight codeword $\mathbf{c}^{\bot}\in\mathcal{C}^{\bot}$
such that $\mathrm{supp}(\mathbf{c})\cap\mathrm{supp}(\mathbf{c}^{\bot})=\emptyset$.
This shows Lemma \ref{lem_structure} is a generalization of Proposition 14 in \cite{[FW-DCC97]}.

(2). One notes that Theorem 1.1 (ii) in \cite{[QWH-FFA25]} was used to confirm a conjecture
raised by Geng, Yang, Zhang and Zhou in \cite{[GYZZ-FFA22]}. Now we use Lemma \ref{lem_structure}
to give an alternative proof of Theorem 1.1 (ii) in \cite{[QWH-FFA25]}. The proof is given as follows.

At first, let $\mathcal{C}$ be an NMDS code with parameters $[n,k,n-k]$.
Then its dual code $\mathcal{C}^{\bot}$ is an NMDS code with parameters $[n,k^{\bot}=n-k,k]$.
For any subset $I\subseteq \langle n\rangle$ with size $n-k^{\bot}-1=k-1$,
by applying Lemma \ref{lem_structure} with $r=1$,
we get that $\dim(\mathcal{C}(\langle n\rangle\setminus I))=1$.
Hence, by the arbitrariness of subset $I$, we have $\dim(\mathcal{C}(S))=1$
for any subset $S\subseteq\langle n\rangle$ with $|S|=n-k+1$.

Conversely, suppose that $\mathcal{C}$ is an AMDS code with parameters $[n,k,n-k]$ and
$\dim(\mathcal{C}(S))=1$ holds for any subset $S\subseteq\langle n\rangle$ with $|S|=n-k+1$.
We claim that for any subset $I\subseteq \langle n\rangle$ with $|I|=n-k-1$, $|I_0|\le n-k$
($I_0$ follows the notation in Lemma \ref{lem_structure}). In fact, if we assume that there
is a subset $I\subseteq\langle n\rangle $ such that $|I_0|\ge n-k+1$,
then we can pick a subset $S\subseteq I_0$ such that $|S|=n-k+1$. It follows that
$$\dim(\mathcal{C}(S))=|S|-\mathrm{rank}_H(S)=n-k+1-(n-k-1)=2,$$
contradicting with the hypothesis $\dim(\mathcal{C}(S))=1$. Thus the claim is proved.
Then by Lemma \ref{lem_structure}, $|I_1|=n-|I_0|\ge k$,
so does $d(\mathcal{C}^{\bot})$ by the arbitrariness of subset $I$.
Since $\mathcal{C}^{\bot}$ is not an MDS code, it follows that $d(\mathcal{C}^{\bot})<k+1$.
Consequently, $d(\mathcal{C}^{\bot})=k$, which yields that $\mathcal{C}$ is an NMDS code as desired.

(3).
In 2021, Ding, Tang and Tonchev \cite{[DTT-DCC21]} used the general projective linear group
$\mathrm{PGL}(2,2^m)$ to study a class of AMDS codes $\mathcal{C}$
$\mathbb{F}_{2^m}$ with parameters $[2^m+1,2^m-3,4]$.
They proved that the supports of the codewords of Hamming weight $4$ in $\mathcal{C}$
form a $3$-$(2^m+1,4,2)$ design. However, they did not give the weight enumerators of $\mathcal{C}$
and its dual code $\mathcal{C}^{\bot}$. By Theorem \ref{thm1.1}, the weight enumerator
$A_{\mathcal{C}^{\bot}}(z)$ of the dual code $\mathcal{C}^{\bot}$ can be obtained immediately as:
\begin{align*}
A_{\mathcal{C}^{\bot}}(z)=& 1+(2^m-1)\prod_{j=1}^{3}\frac{2^m-2+j}{2+j}z^{2^m-4}
+\sum_{i=2^m-1}^{2^m+1}z^{i}\binom{2^m+1}{i}\\
&\biggl((-1)^{i+3-2^m}\binom{i-1}{2^m-3}+\sum_{j=0}^{i+2-2^m}\binom{i}{j}(-1)^{j}2^{m(i+3-2^m-j)}\\
&+(2^m-1)\binom{i}{2^m-2}\sum_{j=2^m-4}^{2^m-3}\frac{(-1)^{i-j}\binom{2}{2^m-2-j}}{\binom{i-j}{2^m-2-j}}\biggr).
\end{align*}
For $m=4$, we have
$$A_{\mathcal{C}^{\bot}}(z)= 1 + 1020 z^{12} + 24480 z^{15} + 15555 z^{16} + 24480 z^{17},$$
which coincides with Example 28 in \cite{[DTT-DCC21]}, thereby confirming our result.

(4).
In \cite{[XTL-DCC22]}, Xiang, Tang and Liu derived the weight enumerator via complex calculations.
However, Theorem \ref{thm1.1} yields that the weight enumerator $A_{\mathcal{C}^{\bot}}(z)$ of the dual code
can be obtained directly without resorting to such calculations:
\begin{align*}
A_{\mathcal{C}^{\bot}}(z)=& 1+(7^m-1)\prod_{j=1}^{3}\frac{7^m-2+j}{5+j}z^{7^m-7}
  +\sum_{i=7^m-1}^{7^m+1}z^{i}\binom{7^m+1}{i}\\
  &\biggl((-1)^{i+3-7^m}\binom{i-1}{7^m-3}+\sum_{j=0}^{i+2-7^m}\binom{i}{j}(-1)^{j}7^{m(i+3-7^m-j)}\\
  &+(7^m-1)\binom{i}{7^m-2}\sum_{j=7^m-7}^{7^m-3}\frac{(-1)^{i-j}\binom{5}{7^m-2-j}}{\binom{i-j}{7^m-2-j}}\biggr).
\end{align*}
For $m=2$, one has
$$A_{\mathcal{C}^{\bot}}(z)= 1 + 16800 z^{42} + 2469600 z^{48} + 808800 z^{49} + 2469600 z^{50}.$$
This agrees with Example 2 in \cite{[XTL-DCC22]}.

\bibliographystyle{amsplain}

\end{document}